\documentclass[journal]{IEEEtran}
\usepackage{graphics,
           psfrag,
           epsfig,
           amsthm,
           cite,
           amssymb,
           url,
           dsfont,
           balance,
           enumerate,
           color,
           setspace
}

\usepackage{multirow}
\usepackage{algorithm} 
\usepackage{algorithmicx}
\usepackage{algpseudocode}

\usepackage{pgfplots}
\usepackage{tikz}
\usetikzlibrary{spy,backgrounds}
\usetikzlibrary{shapes, arrows, positioning}
\usetikzlibrary{calc}

\newcommand{\gettikzxy}[3]{%
  \tikz@scan@one@point\pgfutil@firstofone#1\relax
  \edef#2{\the\pgf@x}%
  \edef#3{\the\pgf@y}%
}

\newtheorem{proposition}{Proposition}

\usepackage{amsmath}
\usepackage{ragged2e}
\newcommand{\eref}[1]{(\ref{#1})}
\newcommand{\sref}[1]{Section~\ref{#1}}

\newcommand{\pref}[1]{Proposition~\ref{#1}}
\newcommand{\cref}[1]{Constraint~\ref{#1}}

\newcommand{\tref}[1]{Table~\ref{#1}}
\usepackage{epstopdf}
\newcommand{\ignore}[1]{}

\usepackage{afterpage}

\usepackage{multirow}
\usepackage[labelformat=simple]{subcaption}

\addtolength{\textfloatsep}{-2mm}
\setlength{\abovedisplayskip}{0.5mm}
\setlength{\belowdisplayskip}{0.5mm}
\setlength{\abovecaptionskip}{0.5mm}
\setlength{\belowcaptionskip}{0.5mm}
\setlength{\floatsep}{1mm}

\newcommand{\algrule}[1][2.0pt]{\par\vskip.5\baselineskip\hrule height #1\par\vskip.5\baselineskip}

\usepackage[utf8]{inputenc}

\include{Conclusion/conclusion}

\begin{document}
\IEEEoverridecommandlockouts




\title{RIS Partitioning and User Clustering for Resilient   Non-Orthogonal Multiple Access UAV Networks }
 
\author{Mohammed Saif, Member, IEEE, and Shahrokh Valaee,  Fellow, IEEE
	
	
	\thanks{Mohammed Saif and Shahrokh Valaee are with the Department of Electrical and Computer Engineering, University of Toronto, Toronto, Canada, 
Email: mohammed.saif@utoronto.ca, valaee@ece.utoronto.ca.

This work was supported in part by funding from the Innovation for Defence Excellence and Security (IDEaS) program from the Department of National Defence (DND).

Mohammed Saif work was supported in part by the Natural Sciences and Engineering Research Council of Canada (NSERC) under the Postdoctoral Fellowship Award.
	}
 
	\vspace{-.35cm}
 }

\maketitle

\begin{abstract}
The integration of reconfigurable intelligent surfaces (RISs) and unmanned aerial vehicles (UAVs) has emerged as a promising solution for enhancing connectivity in future wireless networks. This paper designs well-connected and
resilient UAV networks by deploying and virtually partitioning multiple RISs to create multiple RIS-aided links, focusing on a link-layer perspective. The RIS-aided links are created to connect user equipment (UE) to blocked and reliable UAVs, where multiple UEs can transmit to same UAV via RIS using non-orthogonal multiple access (NOMA), granting access to UEs and maximizing network connectivity. We first derive exact and approximated closed-form  expressions for signal-to-interference plus noise ratio (SINR) based on aligned and non-aligned RIS-aided beams. Then, we propose to formulate the problem of maximizing network connectivity that jointly considers (i) UE NOMA clustering, (ii) RIS-aided link selection, and (ii) virtual RIS partitioning. This problem is a computationally expensive combinatorial optimization. To tackle this problem, a two-step iterative approach, called RIS-aided NOMA, is proposed. In the first step, the UEs are clustered to the RISs according
to their channel gains, while UAVs are associated to those generated clusters based on their reliability, which measures the criticality of UAVs. The second step   optimally partitions the RISs to support each
of the cluster members. In this step, we   derive the closed-form equations for the optimal partitioning of RISs
within the clusters. Simulation results
demonstrate that the proposed RIS-aided NOMA 
yields a gain of $30$\% to $40$\%, respectively, compared to UAV traditional scheme.  The finding emphasizes the potential of integrating RIS with UAV communications as a robust and reliable connectivity solution for future wireless communication systems.

\end{abstract}

\begin{IEEEkeywords}
RIS-assisted UAV communications, RIS partitioning and deployment, network  connectivity, NOMA.
\end{IEEEkeywords}

\section{Introduction}
\IEEEPARstart{T}{he} total number of user equipments (UEs) is projected to grow by over 48\%, reaching approximately 
$13$ billion within the next five years, according to the Cisco Mobility Report \cite{Cisco}. This massive surge in mobile-connected UEs, coupled with the increasing demand for reliable connectivity in emerging services such as HD video streaming, sensing, and information gathering, continues to drive the need for massive connectivity solutions.
To address these challenges, the adoption of innovative approaches such as reconfigurable intelligent surfaces (RISs) and unmanned aerial vehicle (UAV) communications is critical for future wireless networks. RISs, widely recognized as a revolutionary technology for beyond-5G wireless systems, continue to play a pivotal role in enhancing spectral efficiency and connectivity while  reducing power consumption and hardware costs \cite{RIS_Mohanad,9293155}. Alongside RIS, UAV communications have demonstrated immense potential to support massive connectivity by enabling flexible deployment, dynamic coverage, and reliable air-to-ground communication links. The integration of RISs and UAV communications, leading to RIS-assisted UAV systems, has emerged as a promising solution to further enhance connectivity \cite{saifglobecom_E, saifglobecom},  increase energy efficiency \cite{Javad_globecom2023}, and improve physical layer security \cite{PLS} and localization \cite{M, Ali, Mu} in next-generation wireless networks.

\subsection{Related Works}
The flexible deployment of UAVs enables the establishment of short distance line-of-sight (LoS) communication links to geographically distant UEs, facilitating connectivity for a large number of UEs and reducing the probability of unconnected UEs \cite{UAV_economy, Rui_Zhang_UAV, 3GPP}. In addition, UAV-based solutions are particularly advantageous for applications such as information gathering in rural wireless networks and networks lacking centralized infrastructure \cite{8292633}. Connectivity in UAV-assisted systems can be further improved by increasing the number of UAVs deployed, as demonstrated and discussed in previous works \cite{8292633, saifglobecom_E}. However, UAVs are inherently prone to failures due to their limited energy capacity, leading to network interruptions and outages that significantly impact the information flow from UEs to fusion centers. Furthermore, deploying a large number of UAVs in densely populated urban areas presents challenges, including site constraints and limited space \cite{Rui_Zhang_UAV}. On the other hand, several works have investigated the deployment of more nodes (sensors and backhaul links) to maximize connectivity, e.g., \cite{4786516, 4657335, H}. Therefore, it is imperative to study alternative solutions that can mitigate these challenges while maximizing the benefits of deploying additional UAVs, without additional system complexity or cost.

RIS consists of a large number of low-cost passive elements that can be dynamically adjusted to alter their phases, thus changing the impinging electromagnetic signals \cite{4A, 6A, 10A}. By smartly controlling the behavior of the waves, RISs can improve the quality, coverage, and energy efficiency  of wireless networks while improving connectivity via adding multiple cascaded channels \cite{5A}, called RIS-aided links. Using RISs to improve connectivity of UAV networks has taken its shape in the recent literature. Recent works improve the connectivity of UAV networks via a single RIS \cite{saifglobecom}, where the authors formulate the network connectivity problem as semidefinite programming (SDP) and solve it using the CVX toolbox \cite{CON}. The authors of \cite{saifglobecom_E} extend the results of \cite{saifglobecom} to multiple RISs and solve the problem via SDP and matrix perturbation. In \cite{saifglobecom, saifglobecom_E}, each RIS assists to improve the connectivity of the UAVs networks by adding a single RIS-aided link that connects a UE and a UAV. These studies demonstrate good performance compared to traditional UAV communications without RIS assistance, which serve as baseline schemes in this paper. Furthermore, \cite{aydin} uses the RIS to boost the strength of the signals for resilient wireless networks. 

RIS virtual partitioning is a RIS optimization technique that enhances communication performance  \cite{PLS, saifVTC}. The idea of RIS virtual partitioning is to divide RIS into several virtual sections, with the phase shifts of each partition configured to a specific direction. This allows a single RIS plate to simultaneously reflect one signal into multiple cascaded signals in different directions. Specifically, in \cite{PLS}, the RIS is used to amplify the signal intended for legitimate UEs while attenuating the signal for illegitimate UEs, thereby enhancing the network secrecy through RIS partitioning. Additionally, the authors of \cite{saifVTC, saifTCOM}  consider optimizing RIS partitioning to maximize UAV connectivity for a simple model of one UE per RIS, limiting the ability  of  RIS to aid the links of multiple scheduled UEs in massive networks. 

Non-orthogonal multiple access (NOMA) is a viable solution to enhance massive connectivity by scheduling multiple UEs to the same network resources \cite{NOMA-1, NOMA-2}. The integration of RIS and NOMA leads to more efficient spectral resource utilization, enabling the fulfillment of diverse quality-of-service (QoS) requirements \cite{NOMA-3, NOMA-4}. 
Specifically, the integration of NOMA and virtual RIS partitioning has been explored in \cite{NOMA, optimal, newnewnew}, where each RIS  dynamically adjusts its phase configuration to enhance the channel gain of individual UEs. This approach aims to maximize NOMA performance by increasing channel gain disparity, which is essential for achieving optimal power-domain multiplexing. In \cite{NOMA}, the authors exploit RIS partitioning to improve the performance of STAR-RIS-aided grant-free NOMA (GF-NOMA) communication system.  The work \cite{optimal} proposes power allocation and RIS partitioning optimization for both downlink and uplink RIS-assisted NOMA network that comprises single BS. None of these recent works considers maximizing the connectivity of UAV networks via RIS deployment, partitioning, and NOMA.  This paper considers a scenario for enhancing the connectivity of UAV networks by proposing novel RIS-aided link selection and RIS partitioning approach for uplink multi-RIS-assisted UAV NOMA networks. As such, we connect  multiple scheduled UEs to the most reliable UAVs via the RISs. We define the reliability of UAVs based on their impact on connectivity if they are removed (i.e., failed) from the network. The least reliable UAV is the one that, when removed, decomposes the network into sub-networks.


\subsection{Motivations and Contributions}
Next-generation wireless networks that require high connectivity and reliability
are inherently vulnerable to interruptions and outages, which need a backup
mechanism at the link-layer to ensure resiliency. This work uses RISs as a resilience
mechanism to counteract such outages—an essential feature in applications such as packet delivery and information
gathering in UAV-assisted networks. 
Hence, we focus on a practical scenario involving a specific target area where a set of UAVs hover at a fixed altitude, connecting with massive UEs, with a few RISs strategically deployed to assist communication between desired UEs and reliable UAVs. The communications between UAVs enhance the connectivity of UEs-RISs-UAVs by providing link redundancy and routing, enabling cooperation, and improving network flexibility.

Motivated by the above, this work  designs a link-layer well-connected and resilient RIS-aided NOMA communication system  by deploying multiple RIS-aided links that connect desired UEs to reliable UAVs using NOMA.  The major contributions of this work can be summarized as
\begin{itemize}
\item We first derive exact and approximated closed-form  expressions for signal-to-interference plus noise ratio (SINR) based on aligned and non-aligned RIS-aided beams. Then, we propose to formulate the problem of maximizing the network connectivity that jointly considers (i) UE NOMA clustering, (ii) RIS-aided link selection, and (ii) virtual RIS partitioning. This problem is a computationally expensive combinatorial optimization.

\item To tackle this problem, a two-step iterative approach, called \textit{RIS-aided NOMA}, is proposed. In the first step, the UEs are clustered to the RISs according
to their channel gains, while UAVs are associated to those generated clusters based on their reliability, which measures the criticality of UAVs. The proposed UE clustering solution is intended to work with the max-sum link quality, where we utilize a linear sum assignment
(LSA) for UE clustering and UAV assignment. The second step   optimally partitions the RISs to support each of the cluster members. In this step, we   derive the closed-form equations for the optimal partitioning of RISs
within the clusters.

\item Simulation results
demonstrate that the proposed RIS-aided NOMA 
yields a gain of $30$\% to $40$\%, respectively, compared to UAV traditional scheme without RISs.  The finding emphasizes the potential of integrating RIS with UAV communications as a robust and reliable connectivity solution for future wireless communication systems.

\end{itemize}

Next, Section \ref{S} presents the system model. Section \ref{PF} delves into the problem formulation and presents the proposed methodology. Section \ref{Link} proposes the UE NOMA clustering and UAV selection. Section IV presents the closed-form optimal RIS partitioning. Section \ref{NR} presents the numerical results obtained by simulations. Finally, Section \ref{C} concludes the paper.

\section{System Model}\label{S}

\subsection{Network Model}\label{NM} 
Consider the schematic shown in Fig. \ref{fig1} with multiple single-antenna UAVs and UEs, and 
multiple RISs with $K$ passive reflecting elements. We represent the sets of UAVs, UEs, and RISs as $\mathcal A$, $\mathcal U$, and $\mathcal R$, respectively. Each uniform planar array (UPA) RIS is equipped with $K=K_h \times K_v$ elements where $K_h$ and $K_v$ separately
denotes the sizes along the horizontal and vertical dimensions
of  RIS, respectively. The
RISs are essential for maintaining connectivity between
UEs and UAVs since the direct UE-UAV links might be obstructed.  Assuming a dense urban scenario, where direct links between the UEs and some UAVs are blocked, RISs can significantly aid in establishing reliable communication with the blocked UAVs. In Fig.~\ref{fig1}, for example, if UE$_{1}$ is not connected, it becomes completely blocked due to the absence of direct links to the UAVs, making the network unconnected (zero connectivity).  

In NOMA schemes, the disparity in power reception is a critical factor influencing the overall performance gain over OMA. Similar to \cite{NOMA, optimal}, we overcome this by partitioning the RISs to serve multiple UEs, where each partition is configured to enhance the quality of its corresponding RIS-aided link. Consequently, each UAV receives signals with coherently aligned phases from its own RIS partition and non-coherently aligned phases from other RIS partitions. For practical implementation, we consider scheduling a small number of UEs per RIS using NOMA due to two  reasons. First, serving multiple UEs through NOMA significantly increases the processing complexity of successive interference cancellation (SIC), which grows nonlinearly with the number of UEs \cite{NOMA}. Second, scheduling more UEs to RIS using NOMA does not substantially improve network connectivity, as each RIS would need to be divided into multiple partitions, resulting in weaker RIS-aided links. This observation is further supported by the numerical results presented in Section \ref{NR}.
Therefore, this study focuses on deploying multiple strong RIS-aided links to optimize connectivity and performance, while practically implementing NOMA.



\begin{figure}[t!]  
\begin{center}
\includegraphics[width=0.99\linewidth, draft=false]{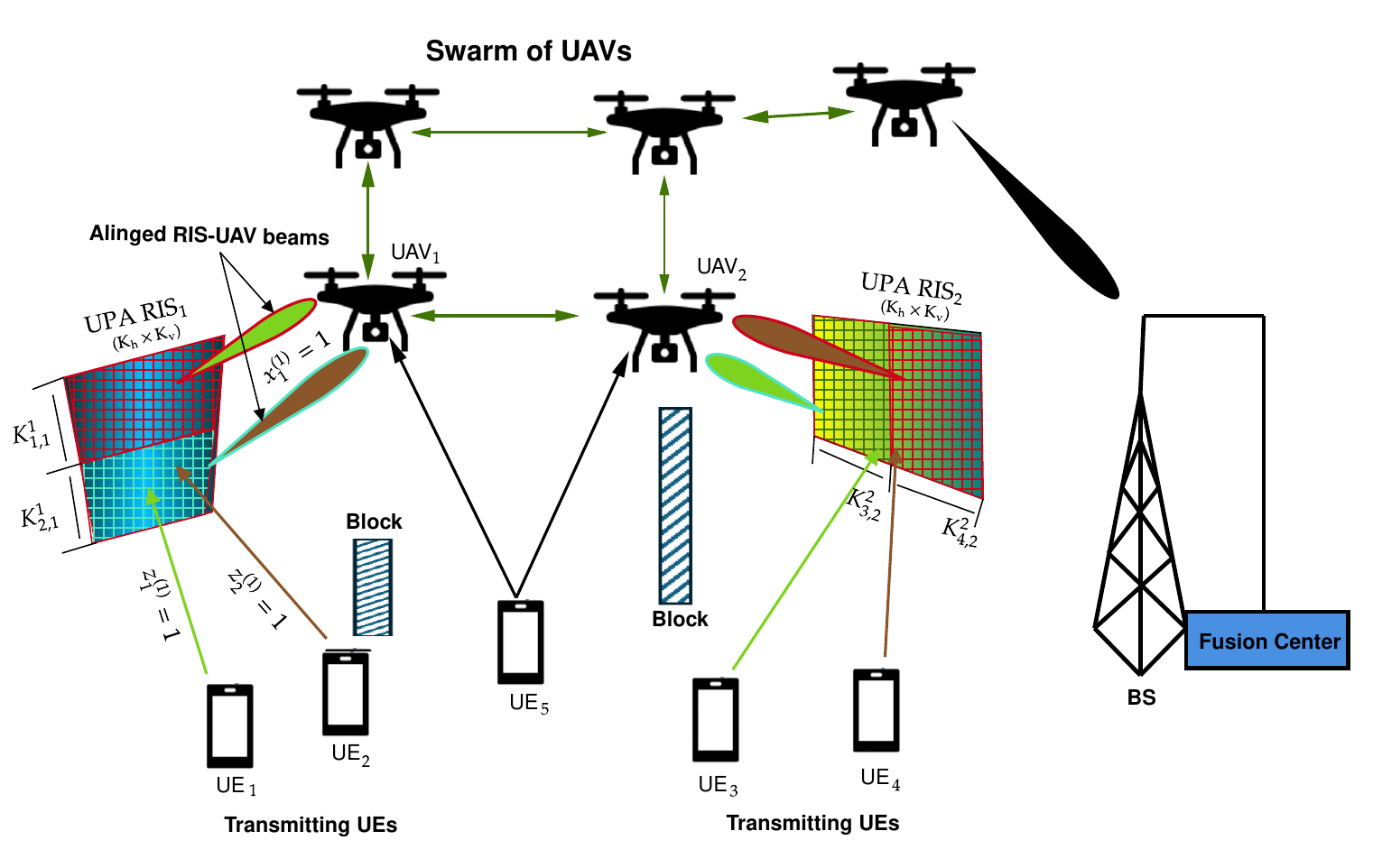}
\caption{RIS-assisted NOMA UAV system with two RISs, each utilizes RIS partitioning technique to aid the signals of two UEs.}
   \label{fig1}
\end{center}
 
\end{figure}

To demonstrate the suitability of this study for massive connectivity, we consider  
a practical dense network scenario where the number of UEs greatly
exceeds the number of available UAVs and RISs, i.e., $U>AR$.  The UEs are grouped into NOMA clusters, denoted by $\mathcal U_r, \forall r \in \mathcal R$. $\mathcal U_r$ also represents the number of partitions in RIS$_r$ and $U_r=|\mathcal U_r|$. The number of NOMA clusters, each with size $U_r$, is equal to the number of RISs $R$, and the terms RIS and cluster are used interchangeably throughout this paper. For simplicity, $\mathcal U_r$ is the associated cluster of RIS$_r$.  Through virtual partition representation, each RIS can connect $U_r$ UEs to one UAV. In this case, the partitions of the RISs  need to be designed to simultaneously boost the signal dedicated to the first  $R$ most reliable UAVs.  The set of the possible transmitting UEs, denoted by $\mathcal U_t$, is at most $\mathcal U_rR$. Fig. \ref{fig1} illustrates that each RIS is partitioned into two portions, where each portion is configured to passively beamform the signals of UE$_{u \in \mathcal U_t}$ to the corresponding UAVs (i.e., $\mathcal U_{r=1}=\{\text{UE}_1, \text{UE}_{2}\}, \mathcal U_{r=2}=\{\text{UE}_3, \text{UE}_{4}\}$ and $U_{r=1}=U_{r=2}=2$).   Low-reliability UAVs are the most critical, as their failure—due to issues such as low battery, hardware malfunctions, or software problems—can cause severe connectivity degradation. Hence, it is essential to avoid relying on these UAVs. The UAV reliability metric, as defined in Section \ref{NRE}, ensures that signals from UEs are transmitted to the most reliable UAVs,   minimizing the risk of connectivity loss and enhancing network resilience.

If RIS$_r$ is assigned to serve cluster $\mathcal U_r$, the number
of elements allocated to aid the signal of UE$_u$ (i.e., $u \in \mathcal U_r$)  to UAV$_a$  is denoted by $K^a_{u,r}=\lceil\alpha^{a}_{u,r}K\rceil$, where $\alpha^{a}_{u,r} \in [0, 1]$ is the RIS allocation factor such that $\sum_{u \in \mathcal U_r} \alpha^{a}_{u,r}  \leq 1$ and $\sum_{u \in \mathcal U_r}K^{a}_{u,r}=K$ ensure  the physically available  number of RIS elements. 
Furthermore, the reflection coefficient matrix of  RIS$_r$ is modeled as $\mathbf{\Theta}_{r} =\text{diag}(J)$, where $J_{r,k}= e^{j\theta^{(k)}_{r}}$ and $\theta^{(k)}_{r} \in [0, 2\pi)$ is the phase shift of the $k$-th element of RIS$_r$, $\forall k$. 

\subsection{Channel Models and Signal Reception} \label{CM} 
\textbf{1) Channel Models:} This work adopts the Nakagami-$f$ fading model \cite{NOMA, 9A} and the channel state information (CSI) is assumed to be available by
using advanced model-based or data-driven channel estimation techniques, e.g., \cite{CSI-1}. Moreover, the
UEs and UAVs are assumed to have limited mobility and experience slow-fading.  The $\text{UE}_u \rightarrow \text{RIS}_r$ channel  is represented by $\mathbf g_{u,r}=[\mathbf g^{(1)}_{u,r}, \ldots, \mathbf g^{(U_r)}_{u,r}]$, consisting of the
channels from UE$_u$ to all $U_r$ partitions of RIS$_r$. In other terms, $\mathbf g_{u,r}=[g^{(1)}_{u,r}, \ldots,  g^{(k)}_{u,r}, \ldots, g^{(K)}_{u,r}]$, where $g^{(k)}_{u,r}=|g^{(k)}_{u,r}|e^{-j\phi^{(k)}_{u,r}}$  denotes the channel coefficient between UE$_u$ and the $k$-th element of RIS$_r$; $|g^{(k)}_{u,r}|$ is the channel amplitude and $\phi^{(k)}_{u,r}$ is the channel phase. The corresponding large-scale path loss for $\text{UE}_u \rightarrow \text{RIS}_r$ is denoted by  $\Tilde{G}_{u,r}$, which is given by $\Tilde{G}_{u,r}=d_{u,r}^{-\frac{\tau}{2}}$ \cite{NOMA, 9A}, where $d_{u,r}$ denotes the corresponding distance of UE-RIS link and $\tau$ is the path-loss coefficient. Likewise, the channel for $\text{RIS}_r \rightarrow \text{UAV}_a$ is represented by $\mathbf h_{r,a}=[\mathbf h^{(1)}_{r,a}, \ldots, \mathbf h^{(U_r)}_{r,a}]$, comprising  
channels from all $U_r$ partitions of RIS$_r$ to UAV$_a$. In other terms, $\mathbf h_{r,a}=[h^{(1)}_{r,a}, \ldots,  h^{(k)}_{r,a}, \ldots, h^{(K)}_{r,a}]$, where $h^{(k)}_{r,a}=|h^{(k)}_{r,a}|e^{-j\psi^{(k)}_{r,a}}$  denotes the channel coefficient between the $k$-th element of RIS$_r$ and  UAV$_a$; $|h^{(k)}_{r,a}|$ is the channel amplitude and $\psi^{(k)}_{r,a}$ is the channel phase.  
The corresponding large-scale path loss for $\text{RIS}_r \rightarrow \text{UAV}_a$ is denoted by  $\Tilde{H}_{r,a}$, which is given by $\Tilde{H}_{r,a}=d_{r,a}^{-\frac{\tau}{2}}$, where $d_{r,a}$ denotes the corresponding distance of RIS-UAV link. Therefore, we consider the cascaded channel from UE$_u$ to UAV$_a$ over the $k$-th element of RIS$_r$ as $h^{(k)}_{u,r,a}= g^{(k)}_{u,r}  h^{(k)}_{r,a}$. Moreover, for the direct links between the UEs and the UAVs, let $h_{u,a}$ and $\Tilde{G}_{u,a}$ denote the small-scale fading coefficient and path-loss for the $\text{UE}_u \rightarrow \text{UAV}_a$ channel, respectively.


\textbf{2) Signal Reception:} For cluster $\mathcal U_r$
assisted by RIS$_r$, the  signal received at
UAV$_a$ can be written as follows\footnote{The  signal received at UAV$_a$ can also consider the different UEs in clusters $\mathcal U_{r'}, \forall r'\in \mathcal R, r'\neq r$. However, those different UEs in  $\mathcal U_{r'}$ are scheduled to transmit to different UAVs$_{a'}$ via RISs$_{r'}$, thus we ignore them in \eref{YO}.}
\begin{align}\label{YO}
y^{(r)}_a=y_u+ \sum_{u' \in \mathcal U_r, u' \neq u} y_{u'} +\underbrace{\zeta_a}_\textbf{AWGN}, 
\end{align}
where the first term is the received signal from  UE$_u \in \mathcal U_r$, the second term is the received signal from UE$_{u'} \in \mathcal U_r$, and the third term is the  additive
white Gaussian noise (AWGN) power density at UAV$_a$ with $w_a \sim  \mathcal {CN}(0, \sigma^2_{\zeta_a})$. The first term of \eref{YO} can be expressed as follows
\begin{align}\label{Yu}
&y_u=     \bigg(\underbrace{\sqrt{\Tilde{G}_{u,a}}h_{u,a}}_\text{direct link}+ \underbrace{\sqrt{\Tilde{G}_{u,r} \Tilde{H}_{r,a}}\mathbf h_{r,a} \mathbf{\Theta}_{r} \mathbf g_{u,r}}_\text{via RIS$_r$ aligned for UE$_u\in \mathcal U_r$} \nonumber \\&  
+ \underbrace{\sum_{r' \in \mathcal R, r' \neq r}\sqrt{\Tilde{G}_{u,r'} \Tilde{H}_{r',a}}\mathbf h_{r',a} \mathbf{\Theta}_{r'} \mathbf g_{u,r'}}_\text{via RIS$_{r'}$ aligned for $\mathcal U_{r'}$}\bigg)\sqrt{p}s_u,
\end{align} 
where $s_u$ is the signal of UE$_u$ and  $p$ is the transmit power of UE$_u$, which is fixed for all UEs. $y_u$ in \eref{Yu}, which can also be used for $y_{u' }$ by replacing $u'$ with $u$, has the following components
\begin{enumerate}

\item Signal received from the $\text{UE}_u \rightarrow \text{UAV}_a$ direct link, $\sqrt{\Tilde{G}_{u,a}}h_{u,a}$;

\item Signal received from UE$_u$ to UAV$_a$ via RIS$_r$, $\sqrt{\Tilde{G}_{u,r} \Tilde{H}_{r,a}}\mathbf h_{r,a} \mathbf{\Theta}_{r} \mathbf g_{u,r}$, which can be written as 
\begin{align}\label{Yuu} \nonumber
& \sqrt{\Tilde{G}_{u,r} \Tilde{H}_{r,a}}\mathbf h_{r,a} \mathbf{\Theta}_{r} \mathbf g_{u,r}=\underbrace{\sqrt{\Tilde{G}_{u,r} \Tilde{H}_{r,a}}  \sum_{k=1}^{K^{a}_{u,r}}h^{(k)}_{u,r,a}e^{j\theta^{(k)}_{u,r}}}_\text{aligned signal from portion $K^{a}_{u,r}$}\\&   
+\underbrace{\sqrt{\Tilde{G}_{u,r} \Tilde{H}_{r,a}}  \sum_{u' \in \mathcal U_r, u' \neq u}\bigg( \sum_{k'=1}^{K^{a}_{u',r}}h^{(k')}_{u,r,a}e^{j\theta^{(k')}_{u',r}}\bigg)}_\text{non aligned signal from portion $K^{a}_{u',r}$}, 
\end{align}
where $\theta^{(k)}_{u,r}$ is the $k$-th associated phase shift of RIS$_r$ with UE$_u$. 
The terms of \eref{Yuu} can be written as $\sqrt{\Tilde{G}_{u,r} \Tilde{H}_{r,a}}  \sum_{k=1}^{K^{a}_{u,r}}h^{(k)}_{u,r,a}e^{j\theta^{(k)}_{u,r}}=\sqrt{\Tilde{G}_{u,r} \Tilde{H}_{r,a}}  \alpha^{a}_{u,r}\sum_{k=1}^{K}h^{(k)}_{u,r,a}e^{j\theta^{(k)}_{u,r}}$
and $\sqrt{\Tilde{G}_{u,r} \Tilde{H}_{r,a}}  \sum_{k'=1}^{K^{a}_{u',r}}h^{(k')}_{u,r,a}e^{j\theta^{(k')}_{u',r}}=\sqrt{\Tilde{G}_{u,r} \Tilde{H}_{r,a}}  \alpha^{a}_{u',r}\sum_{u' \in \mathcal U_r, u' \neq u}\bigg(\sum_{k=1}^{K}h^{(k)}_{u,r,a}e^{j\theta^{(k)}_{u',r}}\bigg)$.

\item Lastly, signal received from other RISs$_{r' \in \mathcal R}$ to UAV$_a$, $\sqrt{\Tilde{G}_{u,r'} \Tilde{H}_{r',a}}\mathbf h_{r',a} \mathbf{\Theta}_{r'} \mathbf g_{u,r'}$.
\end{enumerate}
It is worth noting that in UAV communications, UAVs are deployed to be far away from each other to avoid any collision. Additionally, this paper considers narrow-beam RIS beamforming to UAVs. Consequently, the non-aligned signals  from the other RISs (i.e., the second term in \eref{YO}) are small, which can be neglected. Furthermore, due to the
difficulty of deriving a closed-form solution of $\alpha^{a}_{u,r}$ and $\alpha^{a}_{u',r}$ from \eref{exact}, we ignore the impact of non-aligned phases of the same RIS.
Nevertheless, this assumption has been numerically evaluated in the results section;
please refer to Section \ref{NR}.

In light of these discussions, \eref{YO} can be rewritten as  
\begin{align}\label{Yfinal}  \nonumber
y^{(r)}_a(\boldsymbol{\alpha}_r)=& \sum_{u\in \mathcal U_r} \bigg(  \bigg[\sqrt{\Tilde{G}_{u,a}}h_{u,a}+   \alpha^{a}_{u,r}\sqrt{\Gamma_u}\sum_{k=1}^{K} h^{(k)}_{u,r,a} e^{j\theta^{(k)}_{u,r}}\bigg]  \\&  \times s_{u}\sqrt{p} \bigg)
+\zeta_a,
\end{align}
where $\boldsymbol{\alpha}_r=[\alpha^{a}_{u,r}, \ldots, \alpha^{a}_{U_r,r}]$ and $\Gamma_u=\Tilde{G}_{u,r} \Tilde{H}_{r,a}$.
Given the fixed transmit
powers of UEs, distinguishing between the UEs receiving signals at the UAVs is
critical. However, the related channel gain variations and the associated RIS partitions of each UE can achieve this difference. In this case, the stronger UE’s signal message gets decoded while the messages of the other UE is treated as interference. Specifically,  at UAV$_a$, the received signal, $y^{(r)}_a(\boldsymbol{\alpha}_r)$, is decoded using
the SIC scheme. In this iterative scheme, the signal from the UE with the strongest received power is decoded first while treating signals from other UEs as interference. Once decoded, this signal is subtracted from the total received signal, allowing the next strongest UE's signal to be extracted. This process continues iteratively until all UEs in the ordered cluster have been successfully decoded.
\subsection{RIS Phase Shift and SINR Formulation}
\textbf{1) RIS Phase Shift:}
In this paper, we consider two different cases of RIS phase shifts.
\begin{itemize}
\item \textit{Optimal Phase Shift:} Assuming the  perfect CSI is available, BS computes the
optimal RIS phase shifts and transmits them to the RISs controller via dedicated wireless   or wired  feedback channels. Thus, the phase adjusting coefficient for UE$_i$
$\theta^{(k)}_{i,r}$ is given by $\theta^{(r)}_{i,k}=\arg[h_{i,a}]-\big(\phi^{(k)}_{i,r}+\psi^{(k)}_{r,a}\big)$, $\text{UE}_i \in \{\text{UE}_u, \ldots, \text{UE}_{U_r}\}, \forall k$ \cite{Ednew}. Specifically, for the $k$-th element of RIS$_r$, given $\arg[h_{i,a}]$, $\phi^{(k)}_{i,r}$, and $\psi^{(k)}_{r,a}$, the  controller can cophase $\theta^{(k)}_{i,r}$ with those angles to nullify their effect, which provides the maximum channel gain for $\text{UE}_i \xrightarrow{\text{RIS}_r} \text{UAV}_{a}$ link.  Thus, the phases of
the link between  UE$_i$ and UAV$_a$ via RIS$_r$ is cophased. We
can then express the overall cascaded link between UE$_i$ ($\text{UE}_i \in \{\text{UE}_u, \ldots, \text{UE}_{U_r}\}$) and  UAV$_a$ via RIS$_r$ as
follows
\begin{align}\label{RISH_i}
H_i= \sum_{k=1}^K h^{(k)}_{i,r,a}=\sum_{k=1}^Kg^{(k)}_{i,r}h^{(k)}_{r,a}e^{j\theta^{(k)}_{i,r}}=\sum_{k=1}^K|g^{(k)}_{i,r}||h^{(k)}_{r,a}|.
\end{align}

\item \textit{Discrete Phase Shifts:} In practical systems, discrete phase-shift adjustment is commonly employed \cite{PLS, 38}, where each RIS element can select from  $2^b$ distinct phase shift values, with $b$  denoting the bit resolution of the phase shifter. In this study, we consider phase shift control using quantized phase shift levels, as described in \cite{38}.  For example,  with $b=4$, each RIS element has $16$  possible phase shift options. In contrast, a low bit resolution, such as  $b=2$, allows only 
$4$ phase shift options, resulting in imperfect phase shifts. Thus, we consider a discrete phase shift at the RISs as $\theta^{(k)}_{i,r}=\bigg\{e^{\frac{j2\pi c}{2^b}}\bigg\}_{c=0}^{2^b-1}, \forall r, i,k$ \cite{38}.
\end{itemize}

\textbf{2) SINR Formulation:}
Given \eref{Yfinal}, the signal-to-interference plus noise ratio (SINR) at UAV$_a$ can be mathematically expressed as
\begin{align}\label{exact}
 \gamma^{(r)}_{u,a}(\boldsymbol{\alpha}_r)=\frac{p \bigg| X_u^{\text{direct}}+  X^{\text{aligned}}_u\bigg|^2}{\sum_{u' \in \mathcal U_r, u' \neq u} \bigg(p \bigg| X_{u'}^{\text{direct}}+  X^{\text{aligned}}_{u'}\bigg|^2\bigg)+\sigma^2_{\zeta_a}}, 
\end{align}
where $X_u^{\text{direct}}=\sqrt{\Tilde{G}_{u,a}}h_{u,a}$, $X^{\text{aligned}}_u=\alpha^{a}_{u,r}\sqrt{\Gamma_u}\sum_{k=1}^{K} h^{(k)}_{u,r,a} e^{j\theta^{(k)}_{u,r}}$, $X^{\text{aligned}}_{u'}=\alpha^{a}_{u',r}\sqrt{\Gamma_{u'}} \sum_{k=1}^{K}h^{(k)}_{u',r,a} e^{j\theta^{(k)}_{u',r}}$. After configuring
the required phase shifts while ignoring the non-aligned RIS portion, the SINR can be written as 
\begin{align}\label{exact_1} 
 \gamma^{(r)}_{u,a}(\boldsymbol{\alpha}_r) 
= \frac{p \bigg| \sqrt{\Tilde{G}_{u,a}}h_{u,a}+  \sqrt{\Gamma_u} \alpha_u H_u \bigg|^2}{\sum\limits_{\substack{u\in \mathcal U_r \\ u'\neq u}} \bigg(p \bigg| \sqrt{\Tilde{G}_{u',a}}h_{u',a}+  \sqrt{\Gamma_{u'}} \alpha_{u'} H_{u'} \bigg|^2 \bigg) +\sigma^2_{\zeta_a}},
\end{align}
where $H_{i \in \mathcal U_r}$, given in \eref{RISH_i}, is the $K$-element double-Nakagami$-f$ that is independent and identically distributed (i.i.d.) random variable (RV) with parameters $f_1$, $f_2$, $\Omega_1$, and $\Omega_2$ for the $g^{(k)}_{u,r}$ and $h^{(k)}_{r,a}$ links, respectively, i.e., the distribution of the product of two RVs following the Nakagami$-f$ distribution with the probability density function (PDF) is given in \cite{PLS}.

Furthermore, to facilitate the analysis for closed-form RIS partition optimization and similar to \cite{NOMA, PLS}, we make an approximation for SINR expression in \eref{exact_1} utilizing the
expected values of $X_u^{\text{direct}}$, $X_{u'}^{\text{direct}}$, $X^{\text{aligned}}_u$ and $X^{\text{aligned}}_{u'}, \forall u, u' \in \mathcal U_r$. Specifically, the expectations can be evaluated as $\text {E} \bigg\{ \bigg|X_u^{\text{direct}}\bigg|^2\bigg\}=\text {E} \bigg\{ \bigg|   \sqrt{\Tilde{G}_{u,a}}h_{u,a}    \bigg\} = \Tilde{G}_{u,a}$, since $h_{u,a} \sim \mathcal {CN}(0,1)$,  $\text {E}\big\{|h_{u,a}|^2\big\}=1$; $\text {E} \bigg\{ \bigg| X_u^{\text{direct}} \bigg|^2\bigg\}=\text {E}\bigg\{\bigg|\sqrt{\Gamma_u} \alpha_u H_u\bigg|^2\bigg\}=\text {E}\bigg\{\bigg|\sqrt{\Gamma_u} \alpha^{a}_{u,r}  \sum_{k=1}^K|g^{(k)}_{u,r}||h^{(k)}_{r,a}|\bigg|^2\bigg\}=\Gamma_u\text {E}\bigg\{\bigg|\alpha^{a}_{u,r}  \sum_{k=1}^K|g^{(k)}_{u,r}||h^{(k)}_{r,a}|\bigg|^2\bigg\}=\Gamma_u(\alpha^{a}_{u,r})^2 K^2\frac{1}{f_1 f_2} \frac{\Gamma(f_1+0.5)^2}{\Gamma(f_1)^2}\frac{\Gamma(f_2+0.5)^2}{\Gamma(f_2)^2}=\Gamma_u(\alpha^{a}_{u,r})^2 K^2m$, where $m=\frac{1}{f_1 f_2} \frac{\Gamma(f_1+0.5)^2}{\Gamma(f_1)^2}\frac{\Gamma(f_2+0.5)^2}{\Gamma(f_2)^2}$ and $\Gamma(.)$ denotes the Gamma function.

By substituting these values in \eref{exact_1}, we have 
\begin{align}\label{approG}
&\gamma^{(r)}_{u,a}(\boldsymbol{\alpha}_r)=  \frac{\Tilde{\gamma}_u+(\alpha^{a}_{u,r})^2 K^2m\hat{\gamma}_u}{\sum_{u' \in \mathcal U_r, u' \neq u}\bigg(\Tilde{\gamma}_{u'}+(\alpha^{a}_{u',r})^2 K^2m\hat{\gamma}_{u'}\bigg)+1},
\end{align}
where $\Tilde{\gamma}_u=p\Tilde{G}_{u,a}/ \sigma^2_{\zeta_a}$ and $\hat{\gamma}_u=p\Gamma_u/\sigma^2_{\zeta_a}$.
Moreover, if UAV$_a$ does not have any direct links to the UEs in $\mathcal U_r$, \eref{approG} becomes 
\begin{align}\label{approNO}
&\gamma^{(r)}_{u,a}(\boldsymbol{\alpha}_r)=  \frac{(\alpha^{a}_{u,r})^2 K^2m\hat{\gamma}_u}{\sum_{u' \in \mathcal U_r, u' \neq u}\bigg((\alpha^{a}_{u',r})^2 K^2m\hat{\gamma}_{u'}\bigg)+1}.
\end{align}

\section{Problem Modeling}\label{PF} 
This section describes the original graph UAV network and the modified one after RIS deployment. Then, it formulates the proposed problem and discusses the proposed solution methodologies.

\subsection{Graph Modeling} 
\textbf{1) Original Graph:} The envisioned RIS-assisted NOMA UAV system is modeled using the undirected and weighted graph $\mathcal G_\text{org}(\mathcal V, \mathcal E_\text{org})$, where $\mathcal V$ represents the set of vertices associated with the network nodes (UAVs and UEs) and $\mathcal E_\text{org}$ represents the edges (links). We use the terms nodes, UAVs/UEs, and vertices interchangeably, and also the terms links and edges are used interchangeably.

For a graph edge $e_{l}$, $ 1 \leq l \leq E_\text{org}$, that connects two vertices $(v, v') \in \mathcal V$, we have
\begin{equation}
e_l = \begin{cases}
1 & \text{if}  ~  \gamma^\text{(UAV)}_{v,v'} \geq \gamma_\text{th}^\text{UAV} ~ \text{for}~ \text{UAV}_v \rightarrow \text{UAV}_{v'}~ \text{link},\\
1 & \text{if}  ~  \gamma^\text{(UA)}_{v,v'} \geq \gamma_\text{th}^\text{UE} ~ \text{for}~ \text{UE}_v \rightarrow \text{UAV}_{v'} ~ \text{link}, \\
0 & \text{otherwise},
\end{cases}
\end{equation}
where 
\begin{itemize}
    \item $\gamma^\text{(UAV)}_{v,v'}$ is the signal-to-noise ratio (SNR) for UAV-UAV communications. Basically, we use a LoS path component for each UAV-UAV link \cite{8292633, 9293155}, where the path-loss between UAV$_a$ and UAV$_{a'}$ can be expressed as
 $\text{PL}_{a,a'}=20\log\big( \frac{4 \pi f_c d_{a,a'}}{c} \big),$ where $d_{a,a'}$ is the distance between UAVs $a$ and $a'$, $f_c$ is the carrier frequency, and $c$ is the speed of light. The SNR in dB between UAV$_a$ and UAV$_{a'}$  is $\gamma^\text{(UAV)}_{a,a'}=10\log P-\text{PL}_{a,a'}-10 \log N_0$, where $N_0$ is the AWGN variance.

 \item   $\gamma^\text{(UA)}_{v,v'}$ is the SNR for UE-UAV communications. For $\text{UE}_u \rightarrow \text{UAV}_a$ channel,  
$\gamma^\text{(UA)}_{u,a}=\frac{p|\sqrt{\beta^\text{UA}_{u,a}}h^\text{UA}_{u,a}|^2}{N_0}$.

\item $\gamma^\text{UAV}_\text{th}$ and $\gamma^\text{UE}_\text{th}$ are the minimum SNR thresholds for the UAV-UAV and the UE-UAV communication links, respectively.
\end{itemize}
Let $\mathbf w \in {[\mathbb R^+]}^{E_\text{org}}$  be the weight vector of the UE-UAV and UAV-UAV links, which is defined as $\mathbf w= [ \omega_1, \omega_2, \ldots, \omega_{E_\text{org}}]$. Accordingly, for $e_{l}$, the weight $\omega_l$ is defined as 
\begin{equation}\label{wei}
\omega_l = \begin{cases}
\gamma^\text{(UAV)}_{v,v'} & \text{for}  ~  \text{UAV}_v \rightarrow \text{UAV}_{v'}~ \text{link},\\
 \gamma^\text{(UA)}_{v,v'} &  \text{for}~ \text{UE}_v \rightarrow \text{UAV}_{v'} ~ \text{link}.
\end{cases}
\end{equation}  

Let $\mathbf a_l$ be a vector of link $e_{l}$, where the $v$-th and the $v'$-th elements in $\mathbf a_l$ are given by $a_{v,l}=1$ and $a_{v',l}=-1$, respectively, and zero otherwise. Let $\mathbf A$ be the incidence matrix of $\mathcal G_\text{org}$ with the $l$-th column given by $\mathbf a_l$. Hence, in $\mathcal G_\text{org}(\mathcal V, \mathcal E_\text{org})$, the Laplacian matrix $\mathbf L_\text{org}$ is a $V \times V$ matrix, defined as \cite{4657335}
\begin{equation} \label{lap}
\mathbf L_\text{org}= \mathbf A  ~diag(\mathbf w) ~\mathbf A^T=\sum^{E_\text{org}}_{l=1} \omega_l \mathbf a_l \mathbf a^T_l,
\end{equation}
where the entries of $\mathbf L_\text{org}$ are given by
\begin{equation}
L_\text{org}(v,v') = \begin{cases}
\sum_{s \neq v}\omega_{v,s} &\text{if} ~v=v',\\
-\omega_{l} &\text{if}~ (v, v') \in \mathcal E_\text{org}, \\
0 & \text{otherwise}.
\end{cases}
\end{equation}
 Similar to \cite{new, 4657335, 4786516, 8292633, saifglobecom}, we choose a well-known metric, known as the \textit{algebraic connectivity}, also called the Fiedler value, denoted by $\lambda_2(\mathbf L_\text{org})$, to maximize the connectivity of the  network. 

\textbf{2) Modified Graph:}
With RIS deployment and partitioning, we define a new graph, denoted by $\mathcal G_\text{mod}(\mathcal V, \mathcal E_\text{mod})$, which has a larger set of edges denoted by $\mathcal E_\text{mod}$ with $\mathcal E_\text{mod}=\mathcal E_\text{org} \cup \mathcal E_\text{new}$, where $\mathcal E_\text{new}$ is the new  
$\text{UE}_u \xrightarrow{\text{RIS}_r} \text{UAV}_{a}$ edges, $\forall u\in \mathcal U_t, a \in \mathcal A, r\in \mathcal R$. For a new edge $e_{l}$, $ 1 \leq l \leq E_\text{mod}$, that connects two vertices $(u, a') \in \mathcal V$ via RIS$_r$, we have
\begin{equation}
e_l = \begin{cases}
1 & \text{if}  ~  \gamma^{(r)}_{u,a} \geq \gamma_\text{th}^\text{RIS} ~ \text{for}~ \text{UE}_u \xrightarrow{\text{RIS}_r} \text{UAV}_{a}~ \text{link},\\
0 & \text{otherwise},
\end{cases}
\end{equation}
where  $\gamma^\text{RIS}_\text{th}$ is the minimum SINR threshold of  $\text{UE}_u \rightarrow \text{UAV}_{a}$  via RIS$_r$. Essentially, with RIS deployment and partitioning, we either (i) add a new $\text{UE}_u \xrightarrow{\text{RIS}_r} \text{UAV}_{a}$ link if UE$_u$ is not directly connected to UAV$_a$ or (ii) tune the SINR of the $\text{UE}_u \rightarrow \text{UAV}_{a}$ link using the SINR of the cascaded link of RIS$_r$. As a result, for $e_{l}$, the weight $\omega_l$ is 
\begin{equation}\label{wei}
\omega_l = \begin{cases}
\gamma^{(r)}_{u,a} & \text{Given in \eref{approNO} or} \\
 \gamma^{(r)}_{u,a} &  \text{Given in \eref{approG}}.
\end{cases}
\end{equation}  
The network connectivity gain can be assessed by computing $\lambda_2 (\mathbf L_\text{mod}) \geq \lambda_2 (\mathbf L_\text{org})$, where $\mathbf L_\text{mod}$ is the resulting Laplacian matrix of the new graph $\mathcal G_\text{mod}(\mathcal V, \mathcal E_\text{mod})$. Adding more RIS-aided links can modify the entries of $\mathbf L_\text{org}$ and adds new non-zero elements in 
$\mathbf A$, which generally increases  $\lambda_2(\mathbf L_\text{mod})$. This is because the addition of RIS-aided links strengthens the overall connectivity of the network, making it more difficult for the resulting graph to be partitioned into disconnected subgraphs. 

\subsection{Node Reliability}\label{NRE}


Let $\mathcal G^{-v}_\text{org}$ be the remaining graph after removing vertex $v \in \mathcal V$ along with all its adjacent edges to other vertices in $\mathcal G_\text{org}$, i.e., $\mathcal G^{-v}_\text{org} \subseteq \mathcal G_\text{org}$.   We calculate the connectivity of the remaining graph based on the \textit{Fiedler value} \cite{new}, which is defined as $\lambda_2(\mathcal G^{-v}_\text{org})$. A node that, when removed, significantly reduces the connectivity of the network is declared to be highly critical and thus not reliable.   Therefore,  we measure the reliability of the nodes based on their criticalities, which reflects the  severity of the impact on the connectivity of the remaining graph, which is defined  as 
\begin{equation}\label{eq1}
C_v=\lambda_2(\mathcal G^{-v}_\text{org}).
\end{equation}
Equation \eref{eq1} implies that highly critical nodes are not reliable. If $C_v > C_{v'}$, node $v$ has higher reliability than node $v'$. Since UAVs represent the backhaul core of the considered uplink system, they are the most critical nodes that can severely impact the network connectivity if they fail. Thus, we consider the reliability of the UAVs only. By utilizing RISs, signals of UEs can be redirected to reliable UAVs, resulting in more resilient and connected network. 

 

\subsection{Problem Formulation}
Let $\mathbf X$ be the binary RIS-UAV  assignment matrix with entries $x^{(r)}_{a}$, where  $x^{(r)}_{a}$ is $1$ if UAV$_k$ is connected to RIS$_r$, and  $x^{(r)}_{a}=0$ otherwise. Likewise, let $\mathbf Z$ be the  UE-RIS assignment matrix and comprise binary elements $z^{(r)}_u$  defined as $1$ if  $\text{UE}_u$ is connected to RIS$_r$, and $0$ otherwise.  The proposed optimization problem is formulated as  
\begin{subequations} \nonumber 
\label{eq10}
\begin{align}
&\mathcal{P:} ~\max_{\mathbf Z, \mathbf X, \boldsymbol{\alpha}} ~~~~ \lambda_2(\mathbf L_\text{mod} (\mathbf Z, \mathbf X, \boldsymbol{\alpha}))
\label{eq10a}\\
 &~~~~~~~~~~{\rm s.~t.\ }\\
 &\text{C$_1$:} ~~~~\sum_{r \in \mathcal R} z^{(r)}_{u} \leq 1, ~~~~~~~~~~~~~~ \forall u \in \{1, 2, \dots,  U_t\},\\
  &\text{C$_2$:} ~~~~~\sum_{u \in \mathcal U_t} z^{(r)}_u \leq U_r, ~~~~~~~~~~~\forall r \in \{1, 2, \dots,  R\},\\
  &\text{C$_3$:} ~~~~~\sum_{r \in \mathcal R} \sum_{u \in \mathcal U_t} z^{(r)}_u \leq U_rR,\\
  &\text{C$_4$:} ~~~~\sum_{r \in \mathcal R}   x^{(r)}_{a} \leq 1, ~~~~~~~~~~~~~~\forall a \in \{1, 2, \dots,  A\},\\
   &\text{C$_5$:} ~~~~\sum_{r \in \mathcal R} \sum_{a \in \mathcal A} x^{(r)}_{a} \leq R, \\
  &\text{C$_6$:} ~~~~~\sum_{a \in \mathcal A}  x^{(r)}_{a} z^{(r)}_u \leq 1, ~~~~~~~~~~~\forall r \in \mathcal R, \forall u \in \mathcal U_r,\\
 &\text{C$_7$:} ~~~~~\gamma^{(r)}_{u,a}(\mathbf Z, \mathbf X, \boldsymbol{\alpha}_r)\geq C_a\gamma^\text{RIS}_\text{th}, ~~~~~~~~~\forall (u,a,r),\\
&\text{C$_8$:}~~~~~\sum_{u=1}^{U_r}\alpha^{a}_{u,r} z^{(r)}_ux^{(r)}_{a} \leq 1, ~~~~~~\forall r \in \mathcal R, \forall u \in \mathcal U_r,  \\
&\text{C$_9$:}~~~~~  0 \preceq \boldsymbol{\alpha}_r \preceq 1, ~~~~~~~~~~~~~~\forall r \in \{1, 2, \dots,  R\},\\
& ~~~~~~~~~~   x^{(r)}_{k}, z^{(r)}_u \in \{0,1\}, ~\forall u \in \mathcal U_r, ~~r \in \mathcal R, a \in \mathcal A,
\end{align}
\end{subequations} 
where $\preceq$ is the pairwise inequality  and $\boldsymbol{\alpha}=[\boldsymbol{\alpha}_r, \ldots, \boldsymbol{\alpha}_R]$. In $\mathcal P$, \text{C$_1$} shows that the transmitting UE$_u$ is connected to only one RIS; \text{C$_2$} ensures that at most $U_r$ connections exist between the UEs and each RIS. Thus, \text{C$_1$} and \text{C$_2$} make \text{C$_3$} with at most $U_rR$ links are created between the transmitting UEs and the RISs. \text{C$_4$} implies that UAV$_a$ is connected to only one RIS, and accordingly,  \text{C$_5$} implies that at most $R$ reflected links should be created for the UAVs  via the RISs. \text{C$_6$}  means  that a complete path can be created from a selected UE to the suitable UAV through one RIS.
 \text{C$_7$} states that the SINR of a typical $\text{UE}_u \xrightarrow{\text{RIS}_r} \text{UAV}_{a}$ link should be greater than or equal to the minimum SINR threshold of that link times the reliability value of the corresponding UAV, constituting the QoS constraint on UAV$_a$ based on its reliability. Particularly,  the reliability metric $C_a$ controls the QoS limit set for UAV$_a$, $\forall a$.  \text{C$_8$} ensures that the allocated portion of RIS$_r$ does not exceed unity, i.e., the total number of allocated RIS$_r$ elements is not higher than the total number of RIS$_r$ elements. \text{C$_9$} outlines the domain of optimization variables, meaning that  $ 0 \leq \alpha^{a}_{u,r} \leq 1,$ $\forall (u,a,r)$.

The proposed optimization problem $\mathcal P$ is a mixed integer nonlinear programming (MINLP) problem due to the combinatorial nature of the optimization variables. Obtaining a globally optimal solution of a MINLP is generally difficult even for a moderate size of a network, which necessitates an efficient solution.

\subsection{Problem Methodology}\label{SOM}
If a segment of RIS$_r$ is allocated to aid and reflect the signal of UE$_u$ to UAV$_a$ (i.e., $z^{(r)}_ux^{(r)}_{a}=1$ and $0 \leq \alpha^{a}_{u,r} \leq 1$), a new RIS-aided link $l$ (i.e., $\text{UE}_u \xrightarrow{\text{RIS}_r} \text{UAV}_{a}$) is added to the graph network, we will have
\begin{equation} \label{L'}
\mathbf L_\text{mod}(\boldsymbol{\alpha}_r, z^{(r)}_u, x^{(r)}_{a})=\mathbf L_\text{org}+ z^{(r)}_ux^{(r)}_{a} \omega^o_{u,a} \mathbf a_{u,a} \mathbf a^T_{u,a},
\end{equation}
where $\mathbf a_{u,a}$ is the incidence vector resulting from adding an UE-UAV link to $\mathcal G$ and $\omega^o_{u,a}$ is the weight of a constructed UE-RIS-UAV link, given as
\begin{equation}\label{WE}
\omega^o_{u,a}=\gamma^{(r)}_{u,a}(\boldsymbol{\alpha}_r).
\end{equation}

In this paper, we are interested in maximizing network connectivity by solving $\mathcal P$ under the max-sum link quality, where we maximize the  weights of the added $\text{UE}_u \xrightarrow{\text{RIS}_r} \text{UAV}_{a}$ links. By using \eref{L'}, we get an equivalent objective function of $\mathcal P$ as follows $\lambda_2(\mathbf L_\text{mod} (\boldsymbol{\alpha}, \mathbf Z, \mathbf X))=  \lambda_2( \mathbf L_\text{org}+ \sum_{\substack{r\in \mathcal R}} \sum_{\substack{a\in \mathcal A}}\sum_{u\in \mathcal U_r}z^{(r)}_ux^{(r)}_{a} \omega^o_{u,a}(\boldsymbol{\alpha}) \mathbf a_{u,a} \mathbf a^T_{u,a})$. Since $\mathbf L_\text{org}$ of the original network graph is fixed and network connectivity is a monotonically
increasing function of the added links and their weights \cite{new}, \cite{saifglobecom_E}, we can  maximize
the network connectivity by selecting the $\text{UE}_u \xrightarrow{\text{RIS}_r} \text{UAV}_{a}$ links and maximizing their sum SINR.

Subsequently, $\mathcal P$, that aims at the joint optimization of RIS-aided link selection (i.e., UE-RIS-UAV clustering) and RIS partitioning  to maximize the connectivity via maximizing the sum SINR of the new $\text{UE}_u \xrightarrow{\text{RIS}_r} \text{UAV}_{a}$ links, can be given by 
\begin{subequations} \nonumber 
\label{eq10}
\begin{align}
& \mathcal P^{\text{sum}}: \max_{\boldsymbol{\alpha}, \mathbf Z, \mathbf X} ~~ \lambda_2\bigg(\mathbf L_\text{org} + \sum_{\substack{r\in \mathcal R \\ u\in \mathcal U_r}} \sum_{\substack{a\in \mathcal A}} z^{(r)}_ux^{(r)}_{a} \omega^o_{u,a}(\boldsymbol{\alpha}) \mathbf a_{u,a} \mathbf a^T_{u,a}\bigg)
\label{eq10a}\\
 &~~~~~~~{\rm s.~t.\ }
 ~~~~~ \text{C$_1$} - \text{C$_9$}, 
\end{align}
\end{subequations} 
which uses all constraints of $\mathcal P$.

In this paper, we propose a novel optimization technique  to maximize network connectivity.  The solution methodology for solving $\mathcal P^{\text{sum}}$ effectively  can be   decomposed into the following two subproblems:
\begin{itemize}
    \item Given the initial RIS partitioning $\boldsymbol{\alpha^0}$, $\mathcal P^\text{sum}$ can be solved with the UE NOMA and UAV clustering variables $\mathbf Z$ and $\mathbf X$, which is still MILNP. Thus, we propose an  innovative procedure of two types of clustering: RIS-UAV clustering, denoted as $\mathcal C$ and UE-$\mathcal C$ clustering. The clustering  procedure is presented in \sref{Link}, and its pseudo code  is provided in Algorithm \ref{Algorithm1}.
     
    \item Given the formed clustering $\mathbf Z$ and $\mathbf X$, we focus on RIS partitioning $\boldsymbol{\alpha}$, which is derived in a closed-form in Proposition $1$.  A detailed step-by-step closed-form solution for the RIS partitioning is presented in \sref{RIS}.

\end{itemize}
The proposed iterative solution have two procedures, summarized in Algorithm \ref{alg_all}, that run  as follows: (i) RIS-aided link selection and (ii) RIS partitioning design. These two procedures are executed seamlessly until convergence in a centralized manner at the BS. The BS collects the CSI of the system to perform the RIS-aided link selection and adjusts the phase shifts of the RIS elements. Then, it optimizes the RIS partitions. In the next time slot, these two procedures are re-executed for different network topologies.

\begin{algorithm}[t!]
	\caption{Iterative Approach: Finding $\boldsymbol{\alpha}$, $\mathbf Z$, and $\mathbf X$}
	\label{alg_all}
	\begin{algorithmic}[1]
		\State  Initialize $\boldsymbol{\alpha^t}$ and $t=0$;  
      \Repeat
       
       \State Update $\mathbf Z$ and $\mathbf X$ as in Algorithm \ref{Algorithm1} and \eref{BB};
       \State Optimize $\boldsymbol{\alpha}$  as  in  \pref{pro1}

    \State Calculate $\lambda_{2,t}(\mathbf L_\text{mod}(\mathbf Z, \mathbf X, \boldsymbol{\alpha}))$;
    \Until{$|\lambda_{2,t}(\mathbf L_\text{mod}(\mathbf Z, \mathbf X, \boldsymbol{\alpha}))-\lambda_{2,t-1}(\mathbf L_\text{mod}(\mathbf Z, \mathbf X, \boldsymbol{\alpha}))|< \delta$}
	\end{algorithmic}
\end{algorithm}

\section{RIS-aided Link Selection: UE-RIS-UAV Clustering}\label{Link}
Given RIS partitioning $\boldsymbol{\alpha^0}$, the subproblem for updating $\mathbf Z$ and $\mathbf X$ is given by 
\begin{subequations} \nonumber 
\label{eq10}
\begin{align}
&\mathcal P^\text{sum}_1: \max_{\mathbf Z, \mathbf X} ~~ \lambda_2\bigg(\mathbf L_\text{org} + \sum_{\substack{r\in \mathcal R}} \sum_{\substack{u \in \mathcal U_r}} \sum_{a\in \mathcal A} z_u^{(r)} x_a^{(r)} \omega^o_{u,a} \mathbf a_{u,a} \mathbf a^T_{u,a}\bigg)
\label{eq10a}\\
 &~~~~~~~{\rm s.~t.\ }
 ~~~~~ \text{C$_1$} - \text{C$_7$}. 
\end{align}
\end{subequations}  
This section solves $\mathcal P^\text{sum}_1$ to optimize $\mathbf Z$ and $\mathbf X$ through an innovative clustering solution. Specifically, we propose two procedures of clustering: RIS-UAV clustering and NOMA UE clustering with size $U_r$, which are presented in Algorithm $\ref{Algorithm1}$ and presented as follows.

\textbf{UAV-RIS clustering:} First, we assign the most reliable UAVs to the RISs $\mathcal R$, where each RIS cluster has at most one reliable UAV. We sort the UAVs in an ascending order based on their reliability $C_a, \forall a \in \mathcal A$. This sorted UAV index set is denoted as $\Tilde{\mathcal A}$. Afterwards, the first $R$ UAVs in the set $\Tilde{\mathcal A}$  are associated with the closest RISs.  The formed UAV-RIS clusters are denoted as $\mathcal {C}=\{\mathcal C_1, \ldots, \mathcal C_R\}$, where each cluster refers to one RIS and one reliable UAV, i.e., $x^{(r)}_a=1, \forall (r,a)$. For simplicity, each cluster is indexed by RIS$_r, \forall r$, and we omit the index $a$ hereafter unless when it creates confusion.

\textbf{UE NOMA clustering with size $U_r$:}  We cluster the UEs to the RIS clusters (NOMA clusters) each with size $U_r$, where $U_r$ possible transmitting UEs are associated with one cluster. First, we sort the UEs  in an ascending order based on the RSS of their direct links to the UAVs. This sorted UE index set is denoted as $\Tilde{\mathcal U}$. Such sorted set is crucial to maximize network connectivity and ensure that UEs with weak connections remain connected, thereby avoiding zero connectivity. Specifically, UEs with weak or no direct links to UAVs are prioritized to be clustered to the RISs. As such, we provide alternative transmission paths for UEs with weak channels to UAVs via RISs, enhancing their ability to transmit signals effectively. This sorted UE index set plays a crucial role, where it is
employed to partition the set of the first $RU_r$ weakly connected UEs into $U_r$ subsets labeled as $\mathtt U_i=\Tilde{\mathcal U}\{(i-1)R+1: iR\}, i \in [1, U_r]$. In Line $16$, we initialize the clusters. Specifically, $\mathcal C_r$ is initialized
using the $r$-th element extracted from the first subset of $\mathtt U_1$, $\forall r\in \{1, \ldots, R\}$, i.e., $\mathcal C_r=\mathtt U_1\{ r\}$.

Following, the UEs are iteratively admitted to NOMA clusters as illustrated in the subsequent process:
(1) Line $18$ updates the cluster by temporarily admitting the waited UEs and (2) Line $19$ calls  the utility matrix procedure to compute the utility matrix $\mathbf O_i$ for the $i$-th round of UE admission. The utility matrix  function $\mathbf O_i$ is calculated using nested ``for loops'' between Lines $27$ and $33$. In each iteration of the
inner loop (indexed by $u$), Line $29$ creates a temporary cluster $\mathcal Q_r$ by admitting the UE$_u$ of the set $\mathcal I_i$ into $\mathcal C_r$. (4) Lines $30$ and $31$ compute the values of $O_u^{(r)}$ based on the objective
function of interest by evaluating the sum SINRs of the temporary cluster $\mathcal Q_r$, respectively. (5) Lines $20$, $21$, and $22$ utilize the procedures to identify the final
admissions of the UEs that maximize the sum SINR of all clusters, respectively. These two final assignment procedures ensure that the same UE is assigned to only one NOMA cluster, i.e., the UEs are assigned to the appropriate NOMA clusters. For such assignments, we use the following LSA  optimization of the $i$-th round (i.e., $\mathcal I_i$)
\begin{subequations}   \label{BB}
\begin{align}
& \mathbf B^*_\text{sum} \leftarrow ~\max_{\mathbf B}   \sum_{r\in \mathcal R}\sum_{u\in \mathcal I_i}  b_u^{(r)} O_u^{(r)} 
\label{eq10a}\\ \nonumber 
 &~~~~~~~{\rm s.~t.\ } \sum_{r\in \mathcal R}b_i^{(r)}=1, \forall u\in \mathcal I_i,\\ & \nonumber 
 ~~~~~~~~~~~~~\sum_{u\in \mathcal I_i}b_u^{(r)} =1, \forall r\in \mathcal R. \nonumber 
\end{align}
\end{subequations} 

\textbf{Computational complexity:} The complexity of the proposed UE NOMA clustering
is dominated by the complexity of the utility matrix and the LSA procedures. The RIS assignment and utility matrix calculation complexity for NOMA clustering  is given by $\mathcal O(R)$ and $\mathcal O(\sum_{i=1}^{U_r-1}RI_i)$, respectively. Considering the well-known cubic
complexity of the LSA solution \cite{comp}, the overall complexity
can be obtained as $\mathcal O(\sum_{i=1}^{U_r-1}RI_i+R+(\max(R, I_i))^3)$, which constitutes the entire complexity
of Algorithm \ref{Algorithm1}.

 \begin{algorithm}[t!]
	\caption{UE NOMA Clustering}
	\label{Algorithm1}
	\begin{algorithmic}[1]
		\State \textbf{Input:} $\mathcal U$, $\mathcal A$, $\mathcal R$, $U_r, \forall r$
       \State \textbf{Step 1: UAV-RIS Clustering $(\mathcal R, \mathcal A)$}
       \State Compute the distances between RISs and UAVs
       \State $\mathcal A_0 \leftarrow{}~ \{a\in \mathcal A ~|~  d_{r,a} \leq R^\text{RA}_\text{RIS}, \forall r\}$
       \State $\Tilde{\mathcal A} \leftarrow{}~ \text{SortAscend}(C_a, \forall a \in \mathcal A_0)$ //\textit{UAV ordering based on their reliability}
       \State  $\Tilde{\mathcal A}_0= \Tilde{\mathcal A}\{1:R\}$
       \For{$r=1: R$}
        
       \State $\mathcal C_r \leftarrow{} \bigg(\Tilde{\mathcal A}_0(r), \text{Closest RIS}\bigg)$
       \State $\mathcal C \leftarrow \mathcal C \cup \mathcal C_r$
       \EndFor
      \State \textbf{Step 2: UE NOMA Clustering  $(\mathcal C, \mathcal U)$}
       \State Compute the distances between UEs and RISs
       \State $\mathcal U_0 \leftarrow{}~ \{u\in \mathcal U ~|~  d_{u,r} \leq R^\text{UR}_\text{RIS}, \forall r\}$
       \State $\Tilde{\mathcal U} \leftarrow{}~ \text{SortDescend}(\text{RSS}, \forall u \in \mathcal U)$ //\textit{UE ordering based on their RSS to the UAVs}

      \State $\mathtt U_i \leftarrow \Tilde{\mathcal U}\{(i-1)R+1: iR\}, i \in [1, U_r]$ //Form UE NOMA clusters

      \State $\mathcal C_r \leftarrow \mathtt U_1\{r\}$, $\forall r=\{1, \ldots, R\}$ //Initialize UE NOMA clusters

       \For{$i=1:U_r-1$}

        \State $\mathcal I_i \leftarrow \mathtt U_{i+1}$

        \State $\mathbf O_i \leftarrow \text{Utility Matrix UE}(\mathcal C_r, \mathcal I_i)$ //Create utility matrix for UE NOMA
clustering

       \State  $\mathbf B^i_\text{sum} \leftarrow \mathbf O_i$
    
     \State $\mathcal C_r \leftarrow \mathcal C_r \cup \mathcal I_i\{u\}$, $y_u^{(r)}=1, \forall (r,u)$~//Update clusters

     \State $z^{(r)}_{u} \leftarrow 1$, $\forall u \in \mathcal C_r$
     \EndFor 
     \State \textbf{return} $\mathbf Z$, $\mathbf X$, and $\mathcal C_r, \forall r$
\algrule
     \State $\textbf{Utility Matrix UE}(\mathcal C_r, \mathcal I_i)$
   
     \State $\mathbf O_i \leftarrow \boldsymbol{0}^{R\times I_i}$

    \For{$r=1:r$}
    \For{$u=1:I_i$}
     \State $\mathcal Q_r \leftarrow \mathcal C_r \cup \mathcal I_i\{u\}$   //\textit{Temp. admit u-th UE of $\mathcal I_i$ to $\mathcal C_r$}

    \State $\gamma^{(r)}_{j} \leftarrow$ \text{Obtain SINR for UE$_j$ from RIS$_r$} using   $\boldsymbol{\alpha^0_r}$, $\forall j \in \mathcal Q_r$
 
       \State $O^{(r)}_u \leftarrow \sum_{j\in \mathcal Q_r} \gamma^{(r)}_u$
     
    \EndFor
     \EndFor
   \State \textbf{return} $\mathbf O_u$  
    
      
	\end{algorithmic}
\end{algorithm}

\section{Optimal RIS Partitioning}\label{RIS}

We maximize the sum SINR of all the RIS-aided links within NOMA cluster under the given SINR constraint and RIS elements allocation. Therefore, given $\mathbf Z$ and $\mathbf X$, the problem of optimal RIS partitioning to maximize the sum SINR of the RIS-aided links within the NOMA cluster can
be formulated as
\begin{subequations} \nonumber 
\label{sum-rate}
\begin{align}
&\mathcal P^\text{sum}_2: ~\max_{\boldsymbol{\alpha_r}} ~~~~  \sum_{u=1}^{U_r}\gamma^{(r)}_{u,a}(\alpha^{a}_{u,r})  
\label{eq10a}\\
 &~~~~~~~{\rm s.~t.\ }\\
 &\text{C$_1^2$:} ~~~~~\gamma^{(r)}_{u,a}(\alpha^{a}_{u,r})\geq C_a\gamma^\text{RIS}_\text{th}, ~~~~~~~u \in \{1, \ldots, U_r\},\\
 & \text{C$_2^2$:} ~~~~~ 0 \leq \alpha^{a}_{u,r} \leq 1, ~~~~~~~~~~~~~~~~u \in \{1,\ldots, U_r\},\\
&\text{C$_3^2$:}~~~~~\sum_{u=1}^{U_r}\alpha^{a}_{u,r} \leq 1. ~~~~~~~   
\end{align}
\end{subequations}

The closed form solution $\alpha^{a^*}_{u,r}, u \in \{1, \ldots, U_r\}$ of the partitions of RIS$_r$ is provided in the following Proposition.
\begin{proposition} \label{pro1}
The optimal RIS$_r$ partitioning that provides the maximized network connectivity via maximizing the sum SINR for the RIS$_r$ cluster is given by
\begin{align}\label{closed}
\alpha^{a*}_{u,r} = \begin{cases}
\sqrt{\left [\frac{C_a \gamma^\text{RIS}_\text{th}\sum\limits_{\substack{u\in \mathcal U_r \\ u'\neq u}}\bigg(\Tilde{\gamma}_{u'}+(\alpha^{a^*}_{u',r})^2 B+1\bigg)-\Tilde{\gamma}_u}{A} \right]^+} & \text{if}~   u \neq 1,\\
1-\sum_{u'=u+1}^{U_r}\alpha^{a^*}_{u',r} & \text{if}~    u=1,
\end{cases}
\end{align}
where $A=K^2m\hat{\gamma}_{u}$, $B=K^2m\hat{\gamma}_{u'}$, and $[x]^+=\max\{x,0\}$.
 \end{proposition}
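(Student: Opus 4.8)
The plan is to recast the objective in a form that exposes the "concentration" structure of the optimum and then read off which constraints are active. First I would introduce the received-power shorthand $P_u=\tilde{\gamma}_u+(\alpha^{a}_{u,r})^2K^2m\hat{\gamma}_u$ and the interference-plus-noise term $I_u=\sum_{u'\neq u}P_{u'}+1$, so that the approximation \eref{approG} collapses to $\gamma^{(r)}_{u,a}=P_u/I_u$, with $A=K^2m\hat{\gamma}_u$ being the controllable part of $P_u$ and the budget $\text{C}_3^2$ coupling the variables only through the $\alpha^{a}_{u,r}$. Writing $\eta=C_a\gamma^\text{RIS}_\text{th}$, the QoS constraint $\text{C}_1^2$ becomes $P_u\geq \eta I_u$. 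The entire claim reduces to two structural facts: (i) at the maximizer the budget $\text{C}_3^2$ holds with equality, and (ii) for every secondary user $u\neq 1$ the QoS constraint $\text{C}_1^2$ is active (or, failing that, $\alpha^{a}_{u,r}=0$). Once these are in hand, \eref{closed} is just the algebra of setting the active constraints to equality.

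The heart of the argument, and what I expect to be the main obstacle, is pinning down this active set, because $\sum_u P_u/I_u$ is a sum of ratios and is not concave, so I cannot invoke convex duality for global optimality. I would instead establish that the objective is ``corner-seeking'' and attains its maximum on the boundary of the feasible region. Concretely, using $\partial\gamma^{(r)}_{u,a}/\partial P_u=1/I_u$ and $\partial\gamma^{(r)}_{v,a}/\partial P_u=-\gamma^{(r)}_{v,a}/I_v$ for $v\neq u$, I would examine a budget-preserving transfer of allocation between two users and show it admits no interior maximizer, so the optimum over the feasible set sits where constraints bind. Intuitively, spreading power incurs the mutual-interference penalty twice while concentrating it incurs it once, so concentration toward the primary user $1$ is pushed as far as the other users' QoS floors allow; this exhausts the budget and leaves each secondary QoS at equality. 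The delicate points are (a) verifying the sign of the directional second difference of the objective along power transfers so that no interior optimum exists, and (b) confirming via KKT complementary slackness that the designated primary user (index $1$ in the NOMA ordering) is the correct concentration target, i.e. that no alternative active set yields a larger value.

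With the active set fixed the equations are immediate, and a degrees-of-freedom check supports it: the $U_r-1$ binding QoS constraints plus the one binding budget give exactly $U_r$ equations for the $U_r$ unknowns. For each $u\neq 1$, setting $\text{C}_1^2$ to equality gives $\tilde{\gamma}_u+(\alpha^{a}_{u,r})^2A=\eta\big(\sum_{u'\neq u}P_{u'}+1\big)$, which solved for $(\alpha^{a}_{u,r})^2$ and truncated by $[\cdot]^+$ is precisely the first branch of \eref{closed}; the truncation simultaneously enforces $\text{C}_2^2$ and covers the case where the direct link $\tilde{\gamma}_u$ already meets the target, in which case $\alpha^{a}_{u,r}=0$ rather than a binding QoS. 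Setting $\text{C}_3^2$ to equality and handing the remaining budget to the primary user yields $\alpha^{a}_{1,r}=1-\sum_{u'=2}^{U_r}\alpha^{a}_{u',r}$, the second branch. The only remaining subtlety is that the first branch is implicit, since its right-hand side contains the other $\alpha^{a^*}_{u',r}$; here I would exploit the decoupling fact that $\gamma^{(r)}_{u,a}=\eta$ forces $P_u=\tfrac{\eta}{1+\eta}D$ with $D=\sum_{u'}P_{u'}+1$ common to all secondary users, so every secondary user attains the same received power and the coupled system collapses to a single scalar equation in $D$ (equivalently in $\alpha^{a}_{1,r}$). This establishes existence and gives a constructive back-substitution that realizes the closed form \eref{closed}.
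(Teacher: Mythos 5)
Your route is genuinely different from the paper's, and the difference matters in one concrete place: the interference model inside the QoS equations. You work with the pre-SIC SINR of \eref{approG}, $\gamma_u=P_u/(\sum_{u'\neq u}P_{u'}+1)$, so that setting $\gamma_u=\eta$ for every secondary user forces $P_u=\tfrac{\eta}{1+\eta}D$ with a common $D$ --- all secondary users end up with identical received powers and the system collapses to one implicit scalar equation. The paper's proof instead uses the post-SIC SINRs implied by the decoding order of Section II: UE$_{U_r}$ (weakest) sees no interference, so its equation $\tilde{\gamma}_{U_r}+(\alpha^{a}_{U_r,r})^2K^2m\hat{\gamma}_{U_r}=C_a\gamma^\text{RIS}_\text{th}$ is already decoupled; UE$_3$'s denominator contains only $P_4+1$, UE$_2$'s only $P_3+P_4+1$, and so on. That triangular structure is what makes the system explicitly solvable by back-substitution from the weakest user upward, and it yields unequal received powers ($P_{U_r}=\eta$, $P_{U_r-1}=\eta(\eta+1)$, \dots) rather than your equal-power solution. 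So your ``decoupling fact,'' while internally consistent with the full-interference reading of the displayed sum, does not reproduce the formula the paper actually derives (its proof makes clear the sum runs over $u'=u+1,\dots,U_r$ only). Note also that handing the entire residual budget to UE$_1$ is easy to justify only in the triangular model, where $P_1$ appears in nobody else's denominator; in your model increasing $P_1$ degrades every other user's SINR, which makes your concentration argument harder, not easier.

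On optimality of the active set: you correctly identify this as the heart of a rigorous proof, but your points (a) and (b) --- no interior maximizer along budget-preserving transfers, and correctness of UE$_1$ as the concentration target --- are stated as intentions rather than arguments, and they are exactly where a nonconcave sum-of-ratios objective can fail to cooperate. To be fair, the paper does not prove this either: it simply declares the allocation scheme (secondary users pinned to their QoS floors, remainder to the strongest user) and verifies the algebra, so ``optimal'' in Proposition~\ref{pro1} is asserted rather than established. Your proposal therefore matches the paper at the algebraic level of setting active constraints to equality, plus an honest but unfinished attempt at more. The actionable fixes are to replace the QoS equations with the SIC-ordered (triangular) interference terms so that the back-substitution and the resulting closed form agree with \eref{closed} as derived in the paper, and either to complete the transfer/KKT argument or to present the active-set choice as a design rule, as the paper does.
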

\begin{proof}
We maximize the sum SINR of the set RIS-aided links  $\text{UE}_u \xrightarrow{\text{RIS}_r} \text{UAV}_{a}$ of RIS$_r$
given  \text{C$_2^2$} and \text{C$_3^2$} 
within the constraint of the minimum SINR requirement of each UE. Our allocation scheme is to allocate RIS elements
to all UEs except for UE with the highest direct UE-UAV link quality. The SINR of the RIS-aided links of those UEs are just equal to their
minimum required SINR. All the remaining RIS elements will be allocated to the UE with the highest direct UE-UAV link quality. 

For simplicity of the  analysis, we explain the steps of the closed-form solution through an example of $4$ UEs (i.e., $U_r=4$) with a reverse chronological order, e.g., \cite{NOMA-1, optimal}.  Suppose that  $\Tilde{\gamma}_1> \Tilde{\gamma}_2>\Tilde{\gamma}_3 > \Tilde{\gamma}_4$, i.e., UE$_1$ is the UE with the highest channel gain, where its SINR is given in \eqref{approG}. \\
\textbf{UE$_4$:} If $\Tilde{\gamma}_4 \geq  C_a \gamma^\text{RIS}_\text{th}$, $\alpha^{a^*}_{4,r} =0$, otherwise $\Tilde{\gamma}_4+(\alpha^{a}_{4,r})^2 K^2m\hat{\gamma}_4=C_a \gamma^\text{RIS}_\text{th}$. By performing some manipulations, we have $\alpha^{a*}_{4,r}= \sqrt{\frac{C_a \gamma^\text{RIS}_\text{th}-\Tilde{\gamma}_4}{K^2m\hat{\gamma}_4}}$.\\
\textbf{UE$_3$:} If $\Tilde{\gamma}_3 \geq  C_a \gamma^\text{RIS}_\text{th}$, $\alpha^{a^*}_{3,r} =0$, otherwise $\frac{\Tilde{\gamma}_3+(\alpha^{a}_{3,r})^2 K^2m\hat{\gamma}_3}{\Tilde{\gamma}_{4}+(\alpha^{a^*}_{4,r})^2 K^2m\hat{\gamma}_{4}+1}=C_a \gamma^\text{RIS}_\text{th}$. By performing some manipulations, we have $\alpha^{a*}_{3,r}= \sqrt{\frac{C_a \gamma^\text{RIS}_\text{th}\bigg[ \Tilde{\gamma}_{4}+(\alpha^{a^*}_{4,r})^2 K^2m\hat{\gamma}_{4}+1\bigg]-\Tilde{\gamma}_3}{K^2m\hat{\gamma}_3}}$.
\\
\textbf{UE$_2$:} If $\Tilde{\gamma}_2 \geq  C_a \gamma^\text{RIS}_\text{th}$, $\alpha^{a^*}_{2,r} =0$, otherwise $\frac{\Tilde{\gamma}_2+(\alpha^{a}_{2,r})^2 K^2m\hat{\gamma}_2}{\Tilde{\gamma}_{3}+(\alpha^{a^*}_{3,r})^2 K^2m\hat{\gamma}_{3}+\Tilde{\gamma}_{4}+(\alpha^{a^*}_{4,r})^2 K^2m\hat{\gamma}_{4}+1}=C_a \gamma^\text{RIS}_\text{th}$. By performing some manipulations, we have $\alpha^{a*}_{2,r}= \sqrt{\frac{C_a \gamma^\text{RIS}_\text{th}\sum_{u=3,4}\bigg[ \Tilde{\gamma}_{u}+(\alpha^{a^*}_{u,r})^2 K^2m\hat{\gamma}_{u}+1\bigg]-\Tilde{\gamma}_2}{K^2m\hat{\gamma}_2}}$.\\
\textbf{UE$_1$:} If $\Tilde{\gamma}_1 \geq  C_a \gamma^\text{RIS}_\text{th}$, $\alpha^{a^*}_{1,r} =0$, otherwise $\alpha^{a^*}_{1,r}=1-\alpha^{a^*}_{2,r}-\alpha^{a^*}_{3,r}-\alpha^{a^*}_{4,r}$.

In general, we have $\alpha^{a*}_{u,r}= \sqrt{\left [\frac{C_a \gamma^\text{RIS}_\text{th}\sum_{u'=u+1}^{U_r}\bigg[\Tilde{\gamma}_{u'}+(\alpha^{a^*}_{u',r})^2 K^2m\hat{\gamma}_{u'}+1\bigg]-\Tilde{\gamma}_u}{K^2m\hat{\gamma}_u} \right]^+}, \forall u \neq 1$. In addition, for the highest channel gain UE$_1$, $\alpha^{a*}_{1,r}= 1-\sum_{u'=u+1}^{U_r}\alpha^{a*}_{u',r}$, which completes the proof. The optimal solutions are subject to the constraint $\sum_{u=1}^{U_r}\alpha^{a^*}_{u,r} = 1$. 
\end{proof}


\section{Numerical Results}\label{NR}
 
This section  presents simulation results to assess the
effectiveness of the proposed RIS-aided UAV NOMA system. Monte Carlo (MC) simulations are employed, where each presented outcome is an average of over $500$ random network realizations. Following \cite{44A, 45A}, we compute all large-scale path loss values using the 3GPP Urban Micro (UMi) model\cite{Urban} at a frequency of $3$ GHz. Additionally, we assume a Nakagami$-f$ shape parameter of
$f_1=f_2=5$ for the cascaded UE-RIS-UAV link and $f_u=1$ for the direct UE-to-UAV links, with a spread parameter of unity for all the links. Similar to \cite{saifglobecom}, we use $\sqrt{\frac{\beta_0}{(d_{u,r})^2}}$ and $\sqrt{\frac{\beta_0}{(d_{r,a})^2}}$ for $\text{UE}_u \rightarrow \text{RIS}_r$ and $\text{RIS}_r \rightarrow \text{UAV}_{a}$ links, respectively, where $\beta_0$ denotes the path loss at the reference distance $d_\text{ref}=1$ m  and $d$ is the corresponding distance. The channel bandwidth is termed as $B$. The successful communications between the UEs and RISs and between the RISs and UAVs  are measured using the distance threshold $R^\text{UR}_\text{RIS}$ and $R^\text{RA}_\text{RIS}$, respectively, i.e., UE$_u$ can be connected to RIS$_r$ with distance $d_{u,r}$ if $d_{u,r} \leq R^\text{UR}_\text{RIS}$ and RIS$_r$ can be connected to UAV$_a$ with distance $d_{r,a}$ if $d_{r,a} \leq R^\text{RA}_\text{RIS}$. The RISs  and UAVs are located at altitudes of $120, 200$ m, respectively. Unless stated otherwise, we use the default system parameters as listed in Table \ref{table_1}.

\begin{table}[t!] 
	\renewcommand{\arraystretch}{0.9}
	\caption{Simulation Parameters}
	\label{table_1}
	\centering
	\begin{tabular}{||p{1.5cm}| p{1.8cm}|p{1.5cm}| p{1.8cm}||}
		\hline
		
		\textbf{Parameter} & \textbf{Value} & \textbf{Parameter} & \textbf{Value}\\
         \hline 
         \hline
		  $c$ & $3\times 10^8$ m/s & $f_c$ & $3$ GHz  \\
		\hline
		$P$ & $1$ watt   & $p$ & $23$ dBm \\   
       \hline
       \hline 
       $\beta_0$ & $10^{-2}$ & $\gamma^\text{RIS}_\text{th}$ & $30$ dB \\
       \hline
       $B$ & $250$ KHz   & $\sigma^2_{\zeta_k}$ & $-130$ dBm \\
      \hline
       $R^\text{RA}_\text{RIS},  R^\text{UR}_\text{RIS}$ & $100, 150$ m & $N_0$ & $-130$ dBm\\
       \hline
	\end{tabular}
	
\end{table}


\subsection{Exact and Approximate SINR Models}
In Fig.~\ref{fig3}, we show the sum rate values obtained by comparing the exact SINR and the SINR formulated by the
approximation, as represented by \eref{exact} and \eref{approG}, respectively, for both  perfect (Fig.~\ref{fig3}(a)) and imperfect (Fig.~\ref{fig3}(b)) RIS phase shifts.  In this part, we consider $4$ UEs, $2$ UAVs, and $2$ RISs, where we set the 3D Cartesian coordinates (in meters) for the UEs as $(318, 200, 0)$, $(100, 50, 0)$, $(150, 170, 0)$, and $(200, 220, 0)$, while  the two UAVs are located at $(460, 340, 200)$ and $(370, 14, 200)$, respectively. The two RISs are placed at $(0, 0, 120)$ and $(100, 100, 120)$, respectively. We consider RIS-aided links, and the RIS partitioning is calculated based on Proposition $1$.

The results of Fig.~\ref{fig3}(a) indicate that the approximate  rates match the exact results, while the approximate rate closely aligns with the exact rate for the imperfect phase shifter of Fig.~\ref{fig3}(b), demonstrating the precision of the approximate model used for the analytical formulation in Section II. Specifically, there is a slight reduction in  the approximate
rate compared to the exact rates in the case of imperfect phase shifter of Fig. \ref{fig3}(b), noting that the term $\sqrt{\Tilde{G}_{u,r} \Tilde{H}_{r,a}}  \sum_{u' \in \mathcal U_r, u' \neq u}\bigg( \sum_{k'=1}^{K^{a}_{u',r}}h^{(k')}_{u,r,a}e^{j\theta^{(k')}_{u',r}}\bigg)$ reduces the exact rate compared to the approximated rate.  Such an approximate rate can be served as a lower bound to the exact rate. Fig.~\ref{fig3} further allows us to rely on the approximate analytical results to study network connectivity. As illustrated in Fig.~\ref{fig3}, the first-order reflections from the non-aligned portions of the allocated RIS and from other RISs in the network have negligible impact on the received signal. This implies that the contributions from second and third order reflections are even less significant. Therefore, it is highly justifiable to focus solely on the received power from the aligned portions of the RISs, while neglecting higher-order random reflections— a simplification that is numerically validated by the presented results.

\begin{figure}[t!]  
\begin{center}
\includegraphics[width=0.99\linewidth, draft=false]{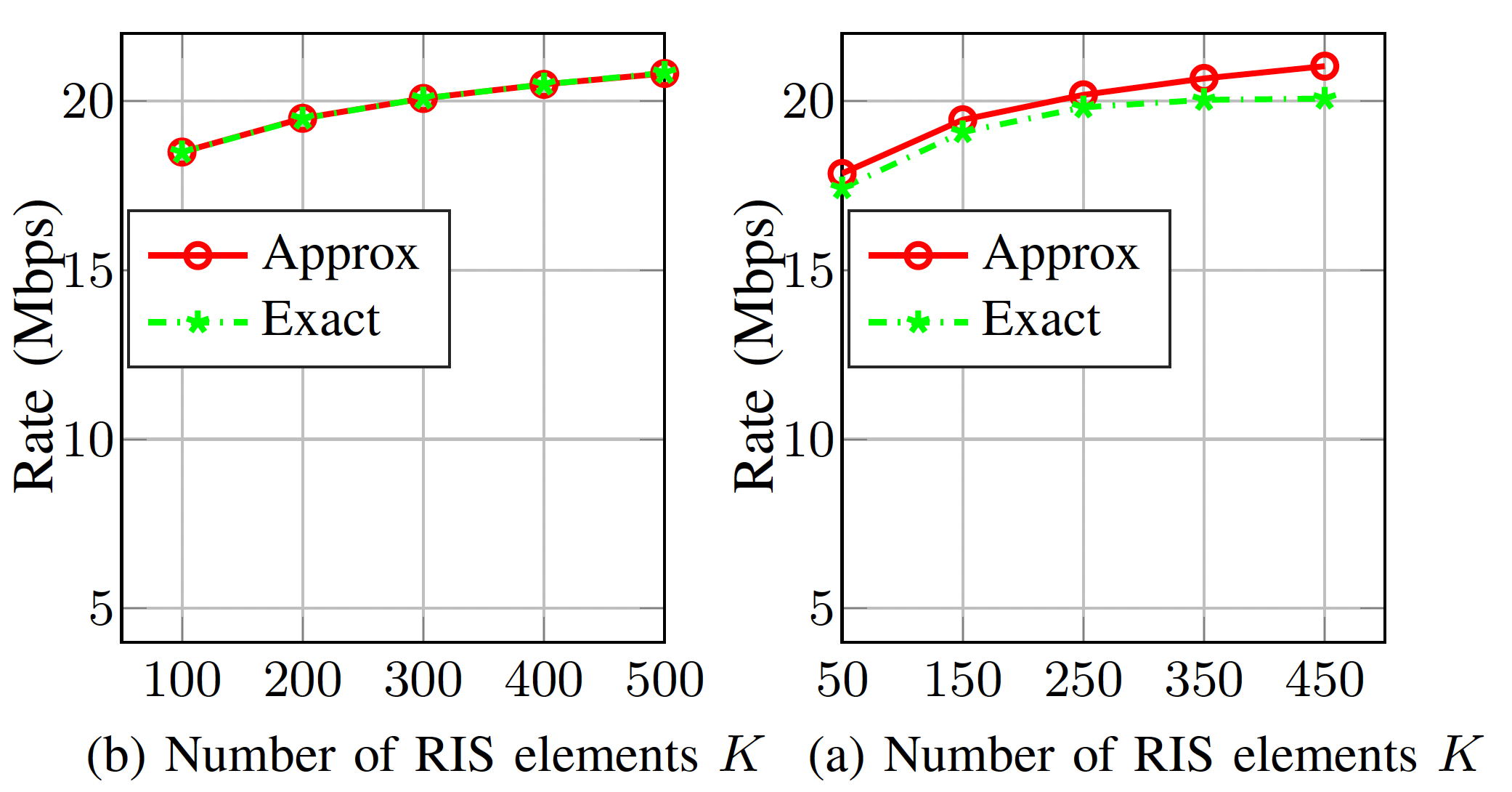}
\caption{Rate versus the number of elements $K$ for $U=4$, $U_r=2$, $A=2$, and $R=2$ for (a) perfect phase shift and (b) imperfect phase shift ($b=3$).}
   \label{fig3}
\end{center}
 
\end{figure}

 \begin{figure}[t!]  
\begin{center}
\includegraphics[width=0.99\linewidth, draft=false]{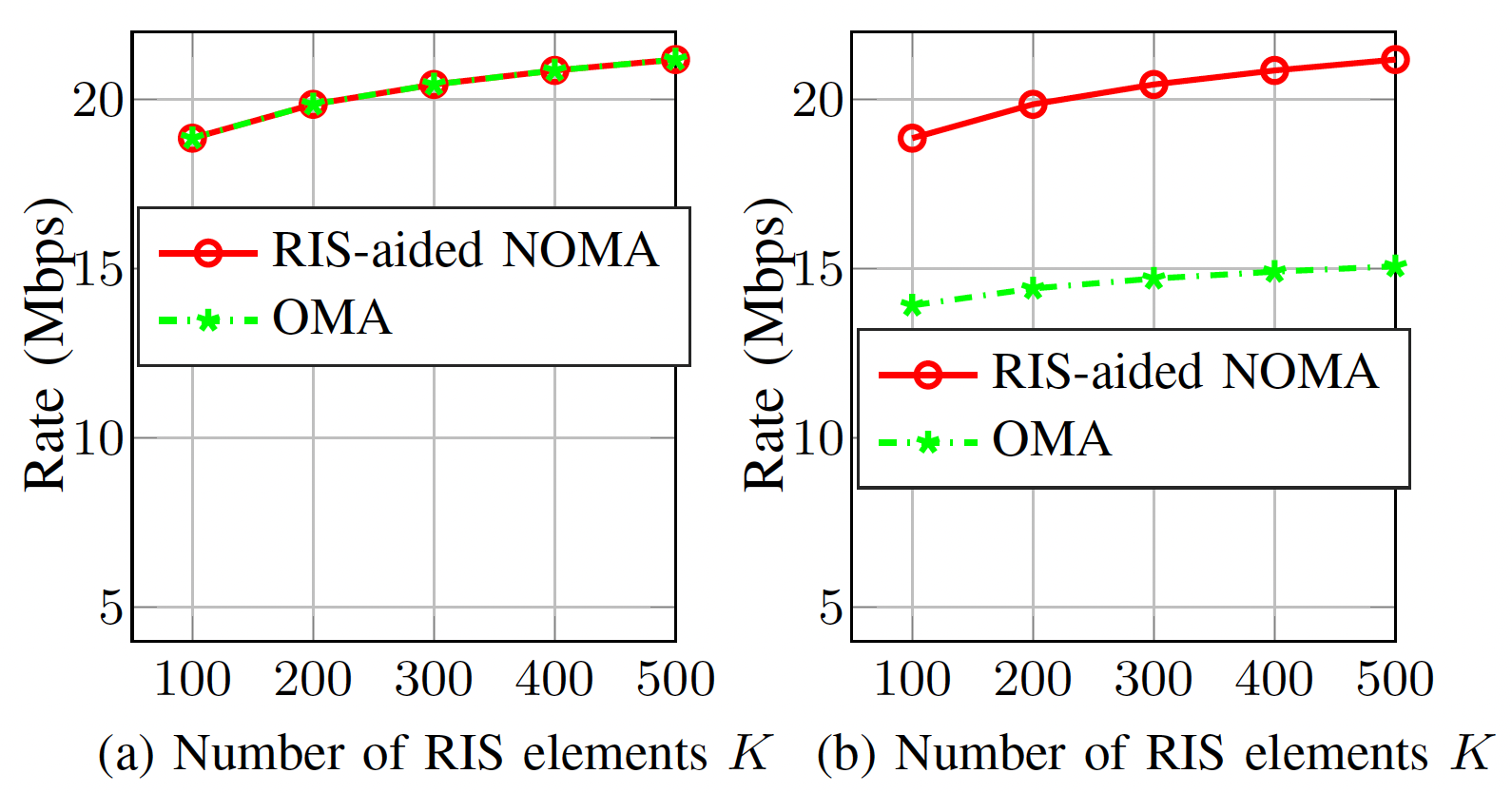}
\caption{Rate versus the number of elements $K$ for $U=4$, $A=2$, and $R=2$ for (a) $U_r=1$ and (b) $U_r=2$.}
   \label{fig4}
\end{center}
 
\end{figure}

Fig.~\ref{fig4} depicts the rates for both NOMA and orthogonal multiple access (OMA). In OMA, each associated UE (i.e., with $U_r=2$) uses half of the bandwidth.  It is evident that the rate increases linearly with the number of RIS elements. Furthermore, NOMA and OMA have the same performance when $U_r=1$ since both utilize the full RIS to reflect the signal of one UE. On the other hand, NOMA
consistently outperforms OMA in terms of rate across different numbers of RIS elements when $U_r=2$. This showcases the advantages of employing NOMA in RIS-aided wireless systems.

\subsection{Network Connectivity: Performance Analysis}
For assessing the performance of the proposed RIS-aided UAV NOMA system, we compare it with the following schemes. 
\begin{itemize}

\item Each RIS creates one strong RIS-aided link, where the whole RIS is being utilized to narrow-beam the signal of the associated UE. Under this scenario, we implement \textbf{Convex Optimization} scheme by performing convex relaxation and SDP optimization for UE-RIS-UAV link selection. This scheme is widely used in \cite{4786516, saifglobecom_E, saifglobecom}.




\item \textbf{Traditional UAV:} This system represents a traditional UAV communication system without RIS assistance, which is inspired by \cite{8292633}.

\item \textbf{Single RIS System:} In this system, a single and large RIS with $K_\text{single}=RK$ is considered to create $U_r$ links.

\end{itemize}

 \begin{figure}[t!]  
\begin{center}
\includegraphics[width=0.99\linewidth, draft=false]{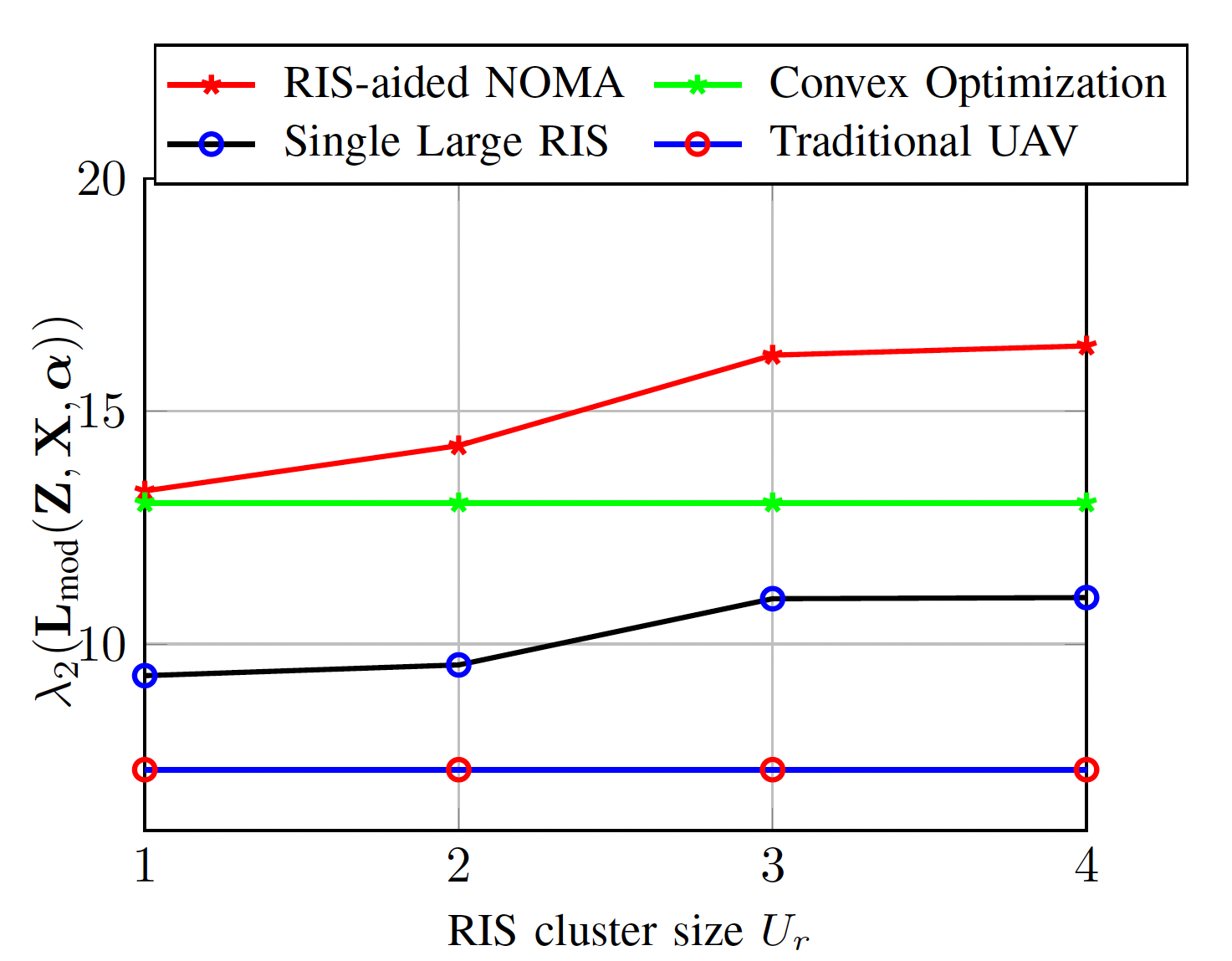}
\caption{Network connectivity versus RIS cluster size $U_r$ for $U=15$, $A=8$, $K=200$, and $R=3$.}
   \label{fig5a}
\end{center}
 
\end{figure}

In Fig. \ref{fig5a}, we plot the average network connectivity versus the number of selected UEs $U_r$ (i.e., RIS cluster size) for a network composed of  $15$ UEs, $8$ UAVs, $3$ RISs, each has $200$ elements. We set $\gamma^{\text{UE}}_0= 84$ dB and $\gamma^{\text{UAV}}_0= 82$. As seen in Fig.~\ref{fig5a}, when optimizing for $U_r$, RIS-aided NOMA achieves the best performance across all values of $U_r$. More specifically, we can see that RIS-aided NOMA has the best performance, then convex, the single RIS, and finally, the traditional UAV. In particular, when $U_r$ is sufficiently small ($U_r=1$), the proposed and convex schemes have close performance since each RIS reflects the signal of an UE to only one UAV. In contrast, when $U_r$ is moderate ($U_r=2, 3$), the proposed scheme achieves  notable performance improvement (nearly $5$\% for $U_r=2$ and $12$\%  for $U_r=3$) over the convex scheme. This improvement is because  the multiple UE-RIS-UAV cascaded links increase as $U_r$ increases. This showcases the importance of judiciously optimizing RIS partitioning and NOMA UE clustering  in the sense that optimizing for $U_r$ did translate to higher network connectivity. However, dividing the RIS into multiple virtual partitions decreases the quality of the  RIS-aided links. This can be seen  at $U_r=4$, where each RIS is divided into $4$ partitions, the improvement does not increase much. The traditional UAV scheme exhibits consistent performance regardless of increasing  $U_r$, thus maintaining fixed network connectivity. Fig. \ref{fig5a} demonstrates that adding more cascaded RIS links does not significantly enhance connectivity once the network is well-connected, especially when the connections between the nodes are already dense.

\begin{figure}[t!]  
\begin{center}
\includegraphics[width=0.99\linewidth, draft=false]{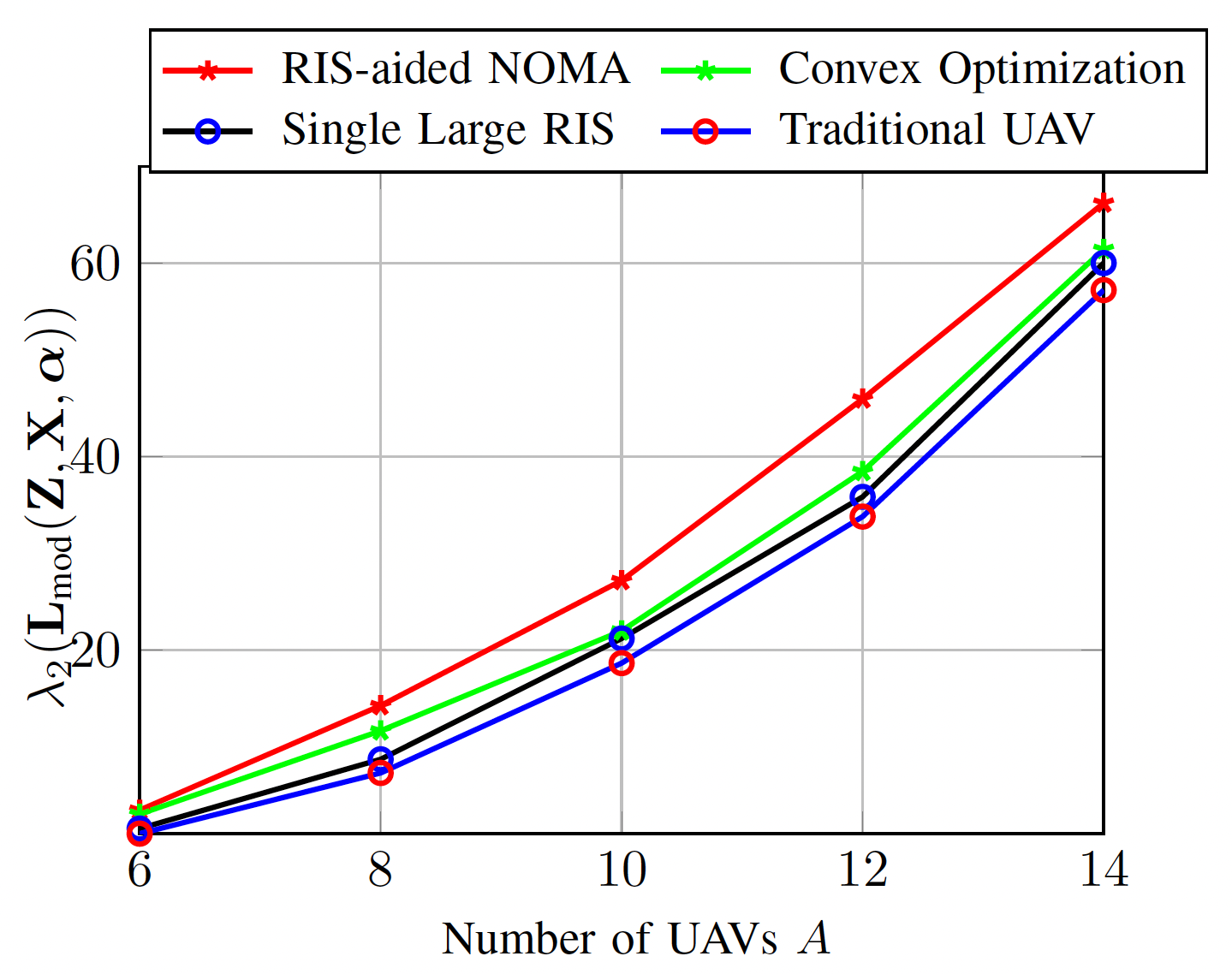}
\caption{Network connectivity versus number of UAVs $A$ for $U=15$, $U_r=3$, $K=200$, and $R=3$.}
   \label{fig5}
\end{center}
 
\end{figure}

\begin{figure}[t!]  
\begin{center}
\includegraphics[width=0.99\linewidth, draft=false]{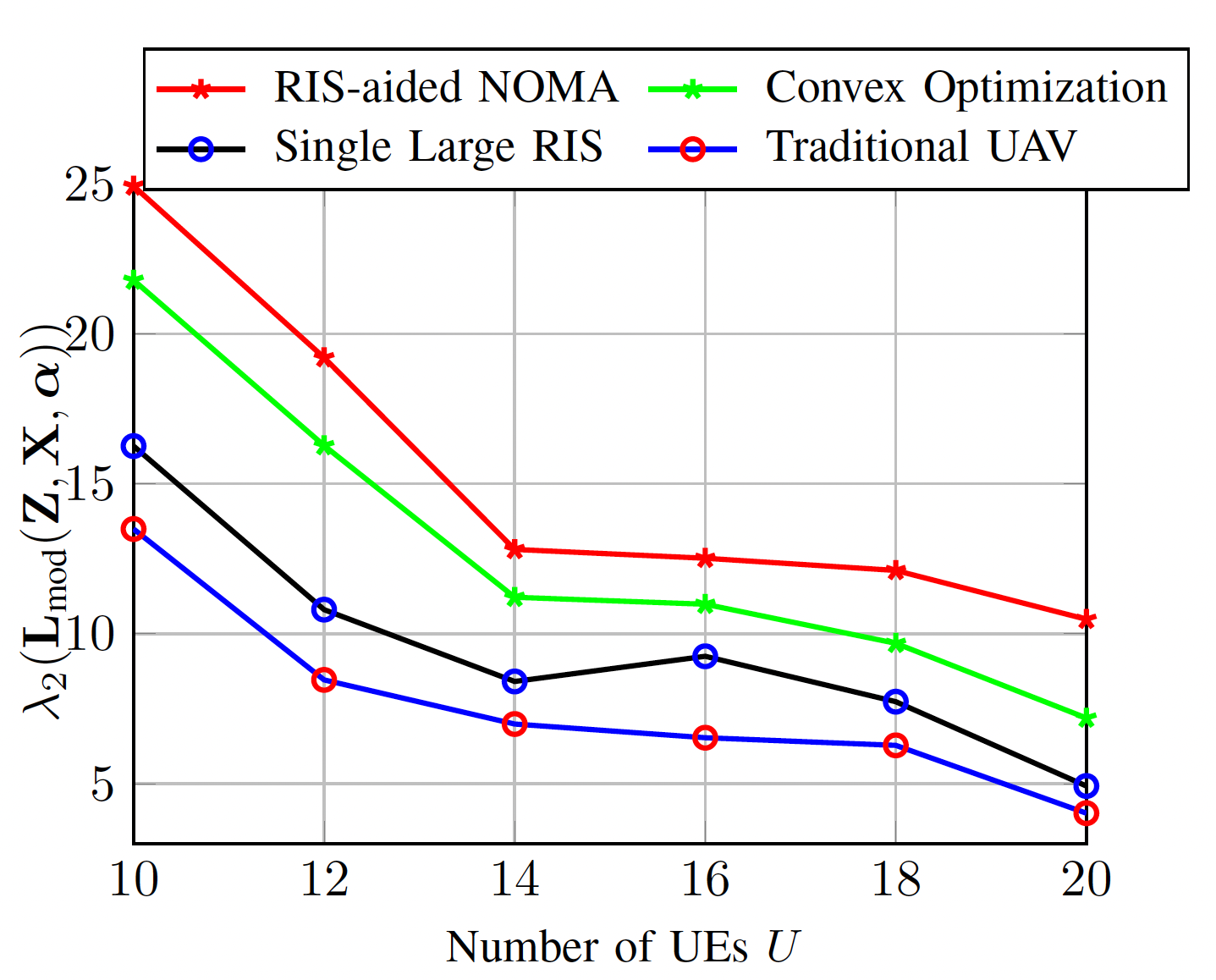}
\caption{Network connectivity versus number of UEs $U$ for $A=8$, $U_r=3$, $K=200$, and $R=3$.}
   \label{fig6}
\end{center}
 
\end{figure}
 
In Figs. \ref{fig5} and \ref{fig6}, we show the connectivity performance versus the number of UAVs $A$ and the number of UEs $U$, respectively. The numbers of RIS elements, RIS-aided links, and RISs are fixed to $200$, $3$, and $3$, respectively, while the number of UAVs $A$ is varied in the range of $[6, 14]$ with $U=15$ in Fig.~\ref{fig5} and the number of UEs $U$ is varied in the range of $[10, 20]$ with $A=8$ in Fig.~\ref{fig6}. From both figures, we can see that RIS-aided NOMA  consistently outperforms the other schemes. As the number of UAVs increases in Fig. \ref{fig5}, the connectivity of all  schemes significantly improves. This is because   most of $\text{UAV}_a \rightarrow\text{UAV}_{a'}$ link connections satisfy the $\gamma^{\text{UAV}}_0$, and adding more UAV nodes creates additional UAV-UAV and UE-UAV edges in $\mathcal G_\text{org}$. The enhanced connectivity as a result of the increased number of UAVs  facilitates more robust and reliable network connections. On the other hand, the connectivity performance of Fig. \ref{fig6} is slightly degraded for all the schemes, indicating that adding more UEs does not necessarily add more edges to the graph, as there are no direct connections between the UEs.

\begin{figure}[t!]  
\begin{center}
\includegraphics[width=0.99\linewidth, draft=false]{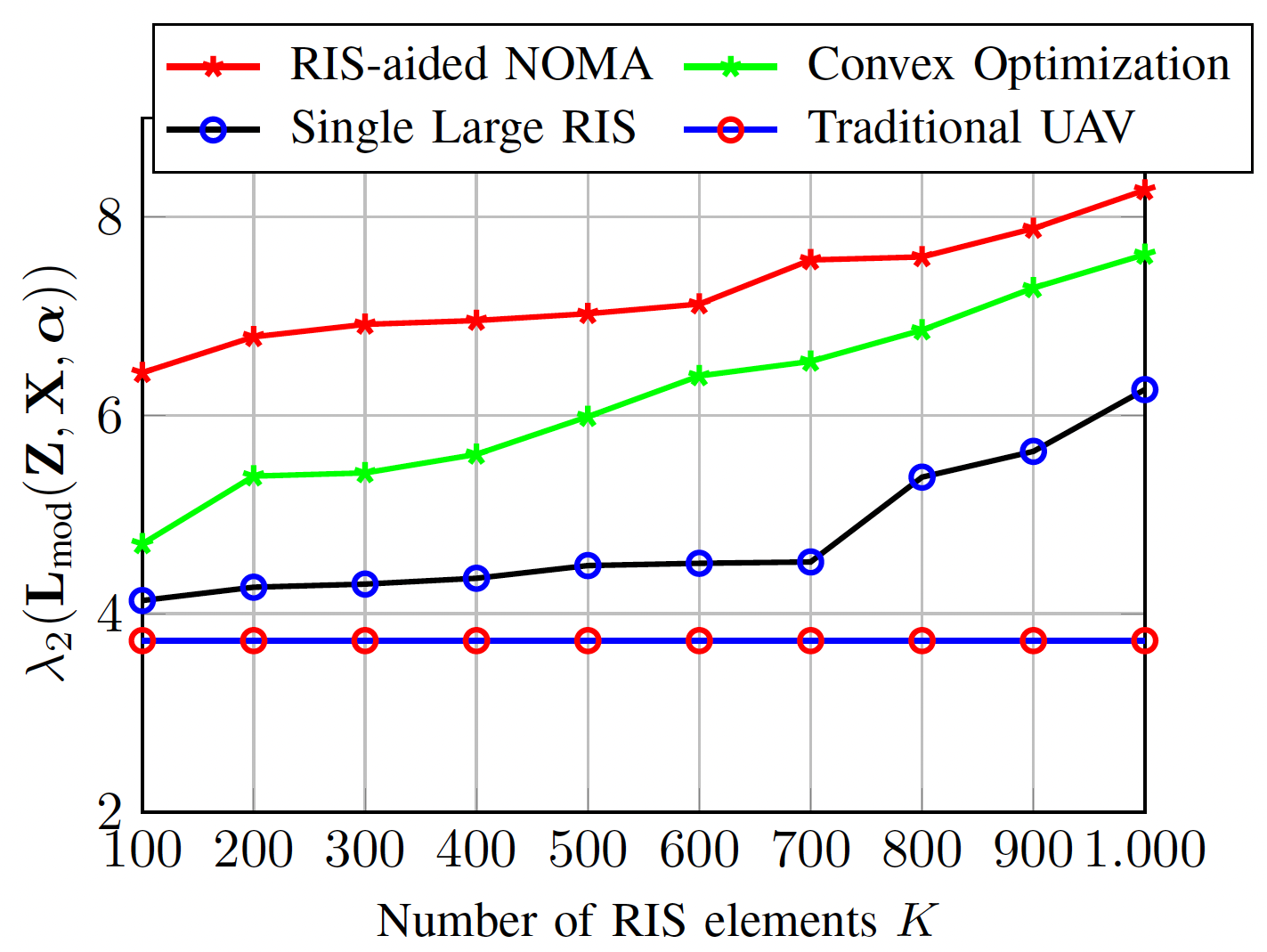}
\caption{Network connectivity versus number of RIS elements $K$ for $A=8$, $U_r=3$, $U=20$, and $R=3$.}
   \label{fig7}
\end{center}
 
\end{figure}

In Fig. \ref{fig7}, we show the network connectivity versus the number of RIS elements $K$ over a range of $[100, 1000]$, where $A=8$, $U_r=3$, $U=20$, and $R=3$.
As can be seen, the increase in the number of RIS elements increases connectivity, since  RISs with more elements can boost up the quality of the new $\text{UE}_{{u=\{1, \ldots, U_r\}}} \rightarrow \text{UAV}_{a}$  links. Specifically, since our RIS partitioning closed-form solution calculates $\alpha_{u=\{2, \ldots, U_r\}}$ that satisfies the minimum QoS of $\text{UE} \rightarrow \text{UAV}_{a}$ links, the remaining RIS elements, which increases as $K$ increases, are configured for  $\text{UE}_{u=1} \rightarrow \text{UAV}_a$ link to maximize its SINR to the maximum value. Thus, this increases the network connectivity. The curve for traditional UAV is maintained constant since it does not change with the RIS elements.

\begin{figure}[t!]  
\begin{center}
\includegraphics[width=0.99\linewidth, draft=false]{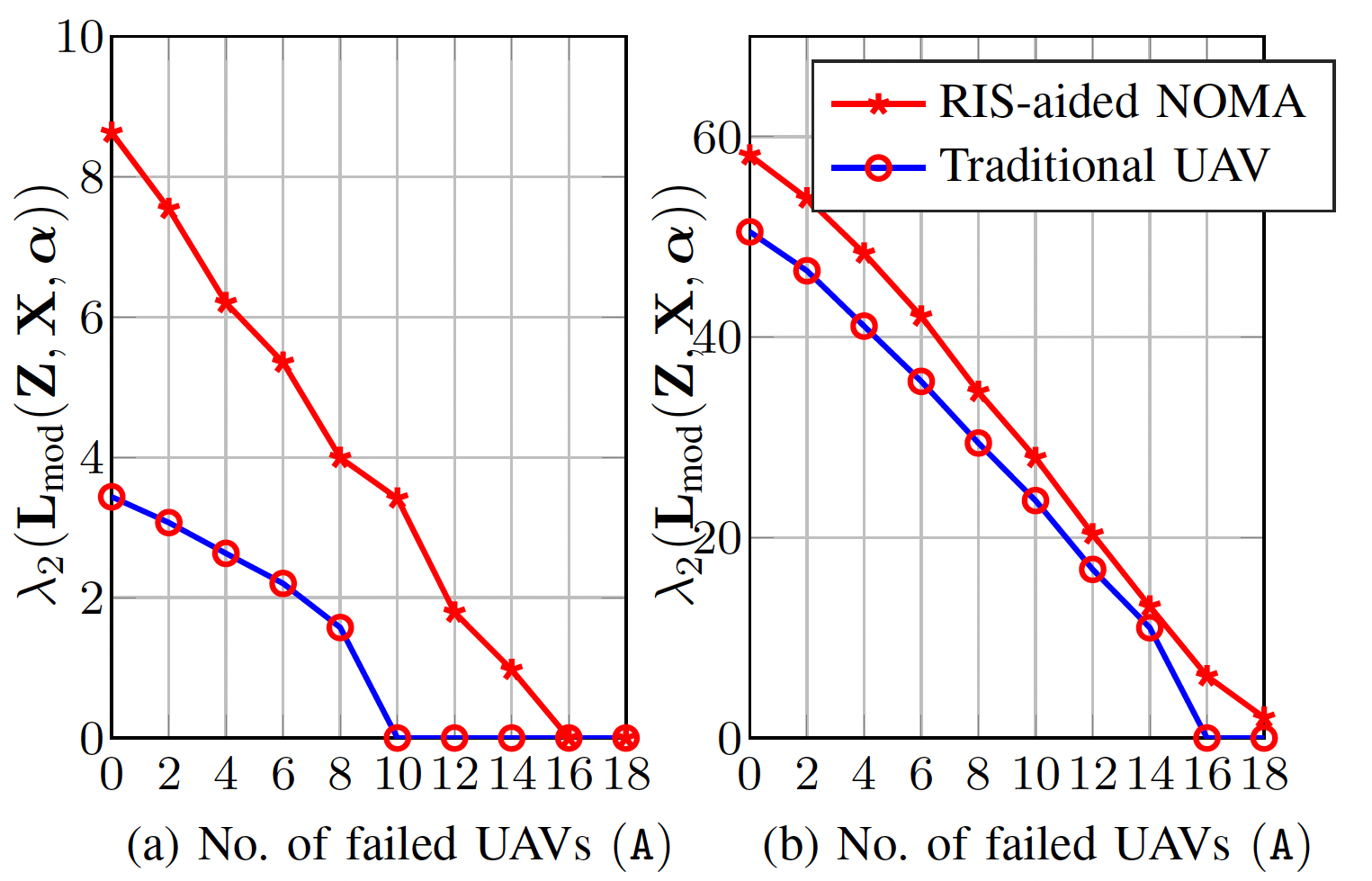}
\caption{Network connectivity versus number of failed UAVs $\mathtt A$ for (a) sparse network and (b) well-connected network.}
   \label{fig10}
\end{center}
 
\end{figure}

Given the designed RIS-aided NOMA system, Fig. \ref{fig10} shows the network connectivity of the proposed scheme compared to the traditional UAV scheme versus the number of failed UAVs $\mathtt A$ that are removed randomly in (a) sparse network with $\gamma^{\text{UE}}_0= 84$ dB and $\gamma^{\text{UAV}}_0= 90$ dB and (b) well-connected network with $\gamma^{\text{UE}}_0= 81$ dB and $\gamma^{\text{UAV}}_0= 80$ dB. We set $U=10$, $A=20$, $R=3$, $U
_r=4$, and $K=1000$.  When a sparse network is deployed, removing a few UAV nodes significantly reduces network connectivity. Consequently, when $\mathtt A=10$, the traditional UAV network becomes disconnected, thus has zero connectivity. The designed network, resulting from our proposed scheme, is more resilient against UAV failure, achieving zero connectivity at $\mathtt A=16$, thanks to NOMA UE clustering, RIS deployment and partitioning. We observe the same behavior for the well-connected network, but with a slower decrease in network connectivity to zero, i.e., at $\mathtt A=16$ for traditional UAV network and $\mathtt A>18$ for RIS-aided NOMA.

\begin{figure}[t!]  
\begin{center}
\includegraphics[width=0.99\linewidth, draft=false]{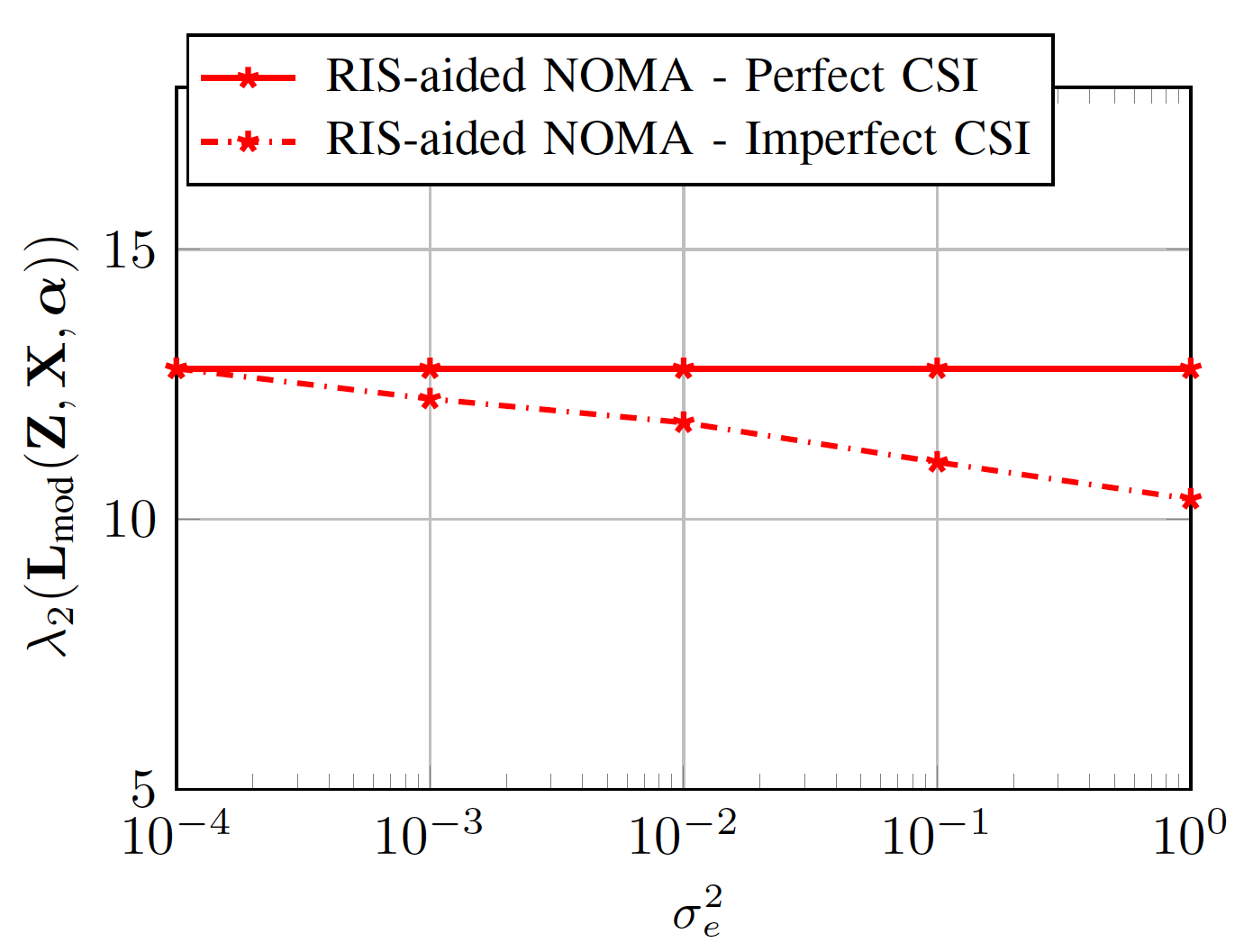}
\caption{Network connectivity versus $\sigma^2_e$ for $A=8$, $U_r=2$, $U=15$, $K=200$, and $R=3$.}
   \label{fig8}
\end{center}
 
\end{figure}

Next, we discuss the impact of imperfect CSI on the performance of the proposed RIS-aided NOMA scheme. We consider the imperfect CSI model in \cite{Imp3, Imp4} as follows. Generally, the channel
gain for the $i-j$ link can be modeled as $g_{i,j}=\sqrt{\beta_{i,j}}h_{i,j}$, where $h_{i,j}$ represents the small-scale fading and is expressed as $h_{i,j}=\hat{h}_{i,j}+h_{e, ij}$; where $\hat{h}_{i,j}$ is the estimated channel gain
that also follows the Nakagami-$f$ distribution with the corresponding parameters of the links UE-RIS, RIS-UAV, and UE-UAV, $h_{e, ij} \sim \mathcal {CN}(0, \sigma^2_{e, ij})$ denotes the estimation error, and $\beta_{i,j}$ is the large-scale model. Specifically, we denote the end-to-end CSI error of $\text{UE}_u \rightarrow \text{UAV}_a$ and $\text{UE}_u \xrightarrow{\text{RIS}_r} \text{UAV}_{a}$ links by $\sigma^2_{e, ua}$ and $\sigma^2_{e, ura}$, respectively. Referring to Section II, $\beta_{i,j}=\beta^\text{UK}_{u,a}$ for the $\text{UE}_u \rightarrow \text{UAV}_a$ link, $\beta_{i,j}=\Tilde{G}_{u,r}$ for $\text{UE}_u \rightarrow \text{RIS}_r$ link, and $\beta_{i,j}=\Tilde{H}_{r,a}$ for the $\text{RIS}_r \rightarrow \text{UAV}_a$ link. Similarly, $\hat{h}_{i,j}=h^\text{UK}_{u,a}$ for $\text{UE}_u \rightarrow \text{UAV}_a$ link, $\hat{h}_{i,j}=g_{u,r}^{(k)}$ for $\text{UE}_u \rightarrow \text{RIS}_r$ link with the $k$-th RIS element, and $\hat{h}_{i,j}=h_{r,a}^{(k)}$ for $\text{RIS}_r \rightarrow \text{UAV}_a$ link with the $k$-th RIS element. Therefore, the SINR at UAV$_a$  can be expressed as  $\gamma^{(r)}_{u,a}(\boldsymbol{\alpha}_r)=  \frac{\Tilde{\gamma}_u+(\alpha^{a}_{u,r})^2 K^2m\hat{\gamma}_u}{\sum\limits_{\substack{u\in \mathcal U_r \\ u'\neq u}}\bigg(\Tilde{\gamma}_{u'}+(\alpha^{a}_{u',r})^2 K^2m\hat{\gamma}_{u'}\bigg)+1+\sigma^2_{e, ua}\Tilde{\gamma}_u+\sigma^2_{e, ura}\hat{\gamma}_u}$

Fig.~\ref{fig8} illustrates the impact of imperfect CSI coefficients $\sigma^2_e$ on network connectivity. The scenario considered includes $A=8$, $U_r=2$, $U=15$, $K=200$, $R=3$, and $\beta_0=-20$ dB, with the imperfection model and related values taken from the literature \cite{Imp1, Imp2}.  We assume $\sigma^2_{e, ua}=\sigma^2_{e, ura}=\sigma^2_{e}$, and present the network connectivity for both perfect CSI ($\sigma^2_{e}=0$) and imperfect CSI as a function of $\sigma^2_{e}$. As can be seen in the figure, imperfect CSI introduces some performance degradation compared to the perfect CSI case, however, the overall impact remains moderate until $\sigma^2_e >10^0$, where a more noticeable impact is observed. In particular, the influence of imperfect CSI mainly affects the SINR of the RIS-aided links, which contributes to degradation of network connectivity. However, since network connectivity is influenced not only by the SINRs of the RIS-assisted links but also by the overall graph structure — including NOMA cluster formation, the reliable scheduling of UAVs, and  the addition of RIS-aided links — the impact of CSI errors on connectivity remains moderate.

\begin{table*}[t!]
    \renewcommand{\arraystretch}{0.9}
    \caption{Execution time (in seconds) and performance of the simulated schemes for varying numbers of nodes.}
    \label{table_2}
    \centering
    \begin{tabular}{|c||cc|cc|cc|cc|}
        \hline
        \multirow{3}{*}{\textbf{Schemes}} & \multicolumn{8}{c|}{\textbf{Number of Nodes (UEs \& UAVs)}} \\
        \cline{2-9}
        & \multicolumn{2}{c|}{\textbf{25}} & \multicolumn{2}{c|}{\textbf{35}} & \multicolumn{2}{c|}{\textbf{45}} & \multicolumn{2}{c|}{\textbf{55}} \\
        \cline{2-9}
        & \textbf{Time} & \textbf{Perf.} & \textbf{Time} & \textbf{Perf.} & \textbf{Time} & \textbf{Perf.} & \textbf{Time} & \textbf{Perf.} \\
        \hline \hline
        RIS-aided NOMA        & $0.148626$ & $30.0889$ & $0.203271$ & $68.9734$ & $0.223271$ & $88.9734$ & $0.253271$ & $100.4974$ \\
        Convex Optimization   & $2.570838$ & $25.6707$ & $5.901637$ & $60.4974$ & $10.329478$ & $78.4974$ & $15.329478$ & $86.4974$ \\
        Single Large RIS      & $0.030709$ & $22.3508$ & $0.064330$ & $58.5885$ & $0.038614$ & $72.4974$ & $0.048614$ & $78.4974$ \\
        Traditional UAV       & $0.030534$ & $19.5680$ & $0.034065$ & $55.6054$ & $0.032308$ & $65.6054$ & $0.034308$ & $75.6054$ \\
        \hline
    \end{tabular}
\end{table*}
 
Finally, \tref{table_2} studies  the run time (in seconds for one algorithm iteration) of MATLAB for all the schemes and their corresponding connectivity performance. The simulations to calculate the running time are carried in MATLAB on a macOS 12.5.1 iMac Apple M1 chip and 8 GB  memory. We consider a network setup of $3$
RISs with different number of nodes (UEs and UAVs). \tref{table_2} shows that the convex scheme requires high computation time than all the other solutions. This is due to the fact that this scheme solves the SDP optimization of all candidate RIS-aided links in the network. On the other hand, our proposed scheme is efficient in terms of low computation time and good network connectivity performance. Thus, our proposed scheme can be executed quickly, making it a preferred method for application in UAV networks.

\section{Conclusion}\label{C}
In this paper, we investigate the connectivity of RIS-assisted UAV NOMA networks from a link-layer perspective by leveraging UE and UAV clustering, along with RIS partitioning strategies. UEs are first clustered with RISs based on their channel gains, and UAVs are then associated with these clusters according to a reliability metric that captures the criticality of each UAV. We derive closed-form expressions for optimal RIS element allocation to enhance signal coverage and connectivity. Extensive simulations demonstrate that the proposed scheme significantly outperforms three benchmark systems, validating the effectiveness of the clustering and RIS partitioning design in enhancing network performance. For future work, optimizing UAV trajectories could further improve network adaptability and efficiency.

\end{document}